\documentclass[letterpaper,11pt,prx,aps,superscriptaddress,showpacs,nofootinbib]{revtex4-2}

\usepackage[utf8]{inputenc}
\usepackage[colorlinks=true,linkcolor=blue,citecolor=blue,urlcolor=blue]{hyperref}
\usepackage{graphicx}
\usepackage{dsfont}
\usepackage{xcolor}
\usepackage{amsthm}
\usepackage{float}
\usepackage{physics}
\usepackage{mathtools}
\usepackage{subcaption}
\usepackage{amsmath,amssymb}

\makeatletter

\makeatother

\newtheorem{lemma}{Lemma}
\newtheorem{definition}{Definition}
\newtheorem{proposition}{Proposition}
\newtheorem{theorem}{Theorem}

\theoremstyle{remark}

\definecolor{FGreen}{RGB}{1,68,33}

\newcommand{\idty}{\mathds{1}}

\newcommand{\cket}[1]{\left | #1 \right )}
\newcommand{\cbra}[1]{\left ( #1 \right |}
\newcommand{\cbraket}[2]{\left ( #1 | #2 \right ) }

\newcommand{\cdyad}[1]{\left \vert #1 \middle) \middle( #1 \right \vert }

\newcommand{\cdyadm}[2]{\left \vert #1 \middle) \middle( #2 \right \vert }

\newcommand{\C}{\mathbb{C}}
\newcommand{\R}{\mathbb{R}}
\newcommand{\Z}{\mathbb{Z}}

\newcommand{\cE}{\mathcal{E}}
\newcommand{\cL}{\mathcal{L}}
\newcommand{\cP}{\mathcal{P}}
\newcommand{\cS}{\mathcal{S}}
\newcommand{\cV}{\mathcal{V}}
\newcommand{\cO}{\mathcal{O}}

\newcommand{\cT}{\mathcal{T}}

\newcommand{\singlesites}{\cP_{\text{single-site}}}

\newcommand{\equivclass}{{\tilde{O}}}
\newcommand{\ppath}{{\gamma}}
\newcommand{\ppathset}{{\Gamma}}

\newcommand{\poly}{\text{poly}}

\newcommand{\syndromepb}[1]{p(#1)}

\newcommand{\synd}[1]{\sigma_{#1}}
\newcommand{\bnb}[1]{\mathcal{B}(#1)}
\newcommand{\hbt}{\mathcal{H}}

\newcommand{\syndvec}{\vec{s}}

%set Toc depth
\makeatletter
\def\l@subsection#1#2{}
\def\l@subsubsection#1#2{}
\makeatother

\begin{document}

\title{The Free Energy Barrier: An Eyring–Polanyi bound for stabilizer Hamiltonians, with applications to quantum error correction}

\author{Nou\'edyn Baspin}
\affiliation{Iceberg Quantum, Sydney}
\affiliation{Centre for Engineered Quantum Systems, School of Physics, University of Sydney, Sydney, NSW 2006, Australia}

\begin{abstract}
The lack of an energy barrier in stabilizer Hamiltonians is well known to be an indication of short thermalisation times; and serves as a simple criteria to rule out self-correction. Its applicability has recently been challenged by the discovery of stabilizer codes with maximal energy barriers whose self-correcting status remains unclear. We address this question by introducing the more general notion of a \emph{free energy barrier}, whose absence is also demonstrated to guarantee fast thermalisation. Applying these new results to the Layer codes, we demonstrate that they lack self-correction.
\end{abstract}

\maketitle

\section{Introduction}

When an open quantum system interacts with an environment at fixed temperature, it is generally understood that, over long periods of time, the system invariably converges towards the corresponding Gibbs state. The speed of that convergence, however, can vary drastically from system to system; and understanding the factors at play is a wide open question. 

In this work we study the thermalisation time of some systems subject to Davies generators \cite{Davies1974, Davies1976, roga2010davies}. These represent the most common model of system-bath interactions in quantum systems, and have been the gold standard to study associated thermalisation processes \cite{rivas2012quantum, alicki2008thermal, alicki2009thermalization}. We will restrict our attention to stabilizer Hamiltonians, as they are singularly relevant to quantum error correction. 

Temme \cite{temme2016thermalization} showed that systems whose energy barrier is less than $\bar{\epsilon}$ thermalise in time at most $\cO(\beta n^2 e^{\bar{\epsilon}})$ at inverse temperature $\beta$; and this result provided theoretical guarantees on the applicability of Arrhenius' law as an upper bound on the mixing time of such a system. This result constitutes a vital theoretical instrument in the study of self-correcting quantum memories. In such a scheme, information is stored in the ground space of a Hamiltonian, whose dynamics are engineered to shield the information from thermal fluctuations. Naturally, the amount of time during which the encoded information is recoverable is upper bounded by the time it takes for the system to completely thermalise\footnote{These times can differ pretty significantly. For example, the 3D toric code is usually understood to have a small coherence time because its string-type logicals get quickly corrupted; while it also exhibits a long thermalisation time as its brane-type logicals are well protected from thermal fluctuations.}. 

Beyond the Arrhenius-Temme bound, our understanding of quantum memories is very limited. Energy barriers are necessary, but could they also be sufficient? Our interest in this question was sparked by the discovery of the Layer Codes \cite{williamson2024layer}: a family of stabiliser codes that exhibit the highest possible energy barrier for 3D local stabilizer Hamiltonians. Their energy barrier is sufficiently large that the Arrhenius-Temme bound does not \emph{a priori} preclude these codes to store information for super-polynomially long times. In this work, we bring this investigation to a conclusion, and show that these codes cannot be self correcting.

Our main results are articulated around the notion of {free energy barrier}\footnote{See Definition \ref{def:free-energy}} and generalise \cite{temme2016thermalization}. We show that the lack of a free energy barrier also bounds the thermalisation time of the system.

\begin{theorem}[Informal restatement of Theorem \ref{thm:final}]
\label{thm:intro-main}
    Let $H$ be a stabilizer Hamiltonian whose free energy barrier is less than $\bar{f}$ at temperature $\beta$. Then its thermalisation time $t_{\mathrm{mix}}$ satisfies:
    \[
    t_{\mathrm{mix}} \leq \cO(\beta n^2 e^{\beta \bar{f}})
    \]
\end{theorem}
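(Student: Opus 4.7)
The plan is to generalise Temme's canonical-paths proof: instead of assigning a single error path to each pair of syndromes and controlling the worst-case energy along it, I would spread a unit flow across an equivalence class of paths and control the \emph{summed} Boltzmann weight along that class, which by definition replaces the energy along the path by the free energy of the class in the exponent.

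I would begin with the standard reductions. For the Davies generator $\cL$ with Gibbs state $\pi \propto e^{-\beta H}$, the mixing time satisfies $t_{\mathrm{mix}} \leq \cO(\log(1/\pi_{\min})/\lambda(\cL))$, and for a bounded-local stabilizer Hamiltonian $\log(1/\pi_{\min}) = \cO(\beta n)$. It therefore suffices to show
\[
\lambda(\cL) \;\geq\; \frac{1}{\poly(n)\, e^{\beta \bar f}}.
\]
Furthermore, on a stabilizer Hamiltonian the Davies generator decouples: it is block-diagonal in the syndrome basis, and on each block reduces to a classical Metropolis-type chain on Pauli configurations with Boltzmann equilibrium weights. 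The quantum gap is bounded below by the gap of this classical chain. These reductions I would simply import from Temme's proof.

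The new ingredient is the flow. Between every pair of Pauli configurations I would construct a unit flow supported on sequences of single-qubit Pauli additions, and partition this flow according to an equivalence relation on paths (for example, paths related by insertion of a stabilizer, or paths visiting the same syndrome trajectory). Applying a Diaconis--Stroock / Sinclair flow inequality, $\lambda^{-1}$ is bounded by a maximum over edges $\edge$ of the total congestion of flows crossing $\edge$. Grouping those flows by equivalence class $\equivclass$, the congestion involves sums of the form
\[
\sum_{\ppath \in \equivclass} e^{-\beta E(\ppath)} \;=\; e^{-\beta F(\equivclass)},
\]
so the worst case is precisely the maximum class free energy, i.e.\ $\bar f$ in the sense of Definition \ref{def:free-energy}. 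Combinatorial overheads (path length, number of classes through a given edge, overcounting of endpoints) should each be $\poly(n)$ and absorb into the prefactor.

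The main obstacle will be choosing the equivalence relation correctly: it must be coarse enough that each class genuinely saturates the free-energy barrier (otherwise one recovers only the energy barrier), yet fine enough that the number of classes crossing any edge and the lengths of member paths remain $\poly(n)$. A related subtlety is verifying flow consistency---correct net flow between every pair of configurations and compatibility with the orientation of Davies transitions---which is where the local structure of the stabilizer Hamiltonian must be used. Once the gap bound is in place, combining with $\log(1/\pi_{\min}) = \cO(\beta n)$ yields $t_{\mathrm{mix}} \leq \cO(\beta n^2 e^{\beta \bar f})$, as claimed.
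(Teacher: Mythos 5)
Your high-level strategy matches the paper's: bound the mixing time by the (inverse) spectral gap, work block-by-block over the syndrome decomposition, and then use a canonical-paths-type bound where the flow for each transition is spread over \emph{many} paths so that a logarithm of the multiplicity subtracts from the energy barrier in the exponent. The paper formalises this via the support-number/factorisation machinery of \cite{temme2016thermalization} (Lemma \ref{lem:factorisation-bound} and Lemma \ref{lem:factorisation}), which is the quantum analogue of the Sinclair/Diaconis--Stroock flow inequality you invoke, and your reduction $t_{\mathrm{mix}} \leq \cO(\log(1/\pi_{\min})/\lambda)$ corresponds to Theorem \ref{thm:support-nb-mixing} and Equation~\eqref{eq:mixing-time}. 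So far the routes coincide.

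Where you diverge, and where there is a genuine gap, is in \emph{how} the free energy is supposed to emerge from the flow. You posit that the congestion naturally organizes into sums $\sum_{\gamma \in \equivclass} e^{-\beta E(\gamma)} = e^{-\beta F(\equivclass)}$ over equivalence classes of paths, and you flag the choice of equivalence relation as the main obstacle. But that is not the quantity the flow bound produces, and the paper never introduces an equivalence relation on paths at all. When you spread unit flow uniformly over a set $\Gamma_P$ of paths for each target Pauli $P$, the congestion a given edge $e$ inherits from $P$ carries a factor $|\Gamma_P^{\cap e}|/|\Gamma_P|$ --- the \emph{fraction} of $P$'s paths that pass through $e$ --- not a Boltzmann-weighted sum over the class. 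This is exactly the content of Proposition \ref{prop:row-norm}, where the row norm picks up $1/\Omega_P(e)$, and it forces the paper's Definition \ref{def:free-energy}: the free-energy barrier is the edge-local quantity $\bar{\epsilon}_P(U_e) + \frac{1}{\beta}\log\big(|\Gamma_P^{\cap e}|/|\Gamma_P|\big)$, maximized over $P$ and $e$. The entropy term is combinatorial (a ratio of path counts at $e$), not a path partition function, and it depends on how paths \emph{share edges}, which your class-level formula does not track. So your proof plan would stall precisely at the step you identify as the ``main obstacle'': the right object is not an equivalence relation to tune, but a per-edge congestion ratio that falls out automatically once the flow is taken uniform on $\Gamma_P$. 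If you replace your sum-over-class expression with the per-edge ratio and use the telescoping identity (Proposition \ref{prop:telescopic}) to verify the $AW=B$ factorisation, your argument becomes the paper's.

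Two smaller points: the Davies rates satisfy KMS (Glauber-type), not Metropolis, though this only affects constants; and you should be explicit that $\log(1/\pi_{\min}) = \cO(\beta n)$ uses $\norm{H}_\infty = \cO(n)$ (the paper states Theorem \ref{thm:final} with $\norm{H}_\infty$ and specializes afterwards).
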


The appeal of this new notion resides in situations where $\bar{f} \ll \bar{\epsilon}$ and Theorem \ref{thm:intro-main} is tighter than the Arrhenius-Temme bound. In Section \ref{sec:layer-codes} we show this to be the case for the Layer codes: while they obey $\bar{\epsilon} \propto n^{1/3}$ \cite{williamson2024layer}, we can bound $\bar{f}$ by a constant -- and thus their coherence time is at most polynomial.

\subsection{Open Questions}
\begin{enumerate}
    \item \label{item:upper-bound}The Bravyi-Terhal \cite{bravyi2009no} bound shows a tight $\bar{\epsilon} \leq \cO(n^{1/3})$ bound for the energy barrier of 3D local codes. Can their result be improved upon by showing an $\bar{f} \ll n^{1/3}$ bound on the free energy barrier of 3D codes \cite{siva2017topological}? Can our bound for the Layer codes be generalised to other 3D local codes with high energy barriers? \cite{portnoy2023localquantumcodessubdivided, lin2024geometricallylocalquantumclassical}

    \item What is the link between our definition of free energy barrier, and that of \cite{rakovszky2024bottlenecksquantumchannelsfinite, placke2024topologicalquantumspinglass}? Could it be leveraged to address Question \ref{item:upper-bound}?

    \item Is it possible to define a free energy barrier for a subalgebra of the $n$-qubit Pauli algebra, and can we show thermalisation in that subsystem alone? This could provide a way to capture decoherence without full thermalisation -- as in the case of the aforementioned 3D toric code. On a related topic, the traditional definition of an energy barrier \cite{bravyi2009no} only concerns \emph{logical} operators, instead of the entirety of $\cP^n$; could that definition be sufficient to show thermalisation within the ground space? For example, can we show that the subspace associated with $\Pi_{\syndvec}$ thermalises quickly if $\tau_{\tilde{O}}(\syndvec)$ is bounded?
    
    \item Is our bound tight? The framework we use to show the bound on the mixing time is an adaptation of the canonical paths method \cite{guruswami2016rapidly}. In its classical setting, it is known that a quantity similar to $\norm{\ket{w_{(U_eQ, V_eQ)}(\syndvec)}}_2^2$ in Proposition \ref{prop:row-norm} also leads to a \emph{lower} bound on the mixing time -- see Theorem 4.8 of \cite{guruswami2016rapidly} and Theorem 8 of \cite{sinclair1992improved}. We suspect \mbox{Theorem \ref{thm:final}} might be tight up to factors polynomial in $n$.

    \item The support number bound (see Theorem \ref{thm:support-nb-mixing}), most commonly referred to as the spectral gap bound \cite{temme2013lowerbounds}, can often be improved on by using the log-Sobolev constant \cite{kastoryano2013quantum}; which offers the potential of reducing the bound on the mixing time from $\poly(n) \beta e^{\beta \bar{f}}$ to $\log(n) \beta e^{\beta \bar{f}}$. In order to realise this improvement, one would likely need to adapt Theorem 2.14 of \cite{montenegro2006mathematical} to quantum Liouvillians. See also \cite{roberto2003path} for an attempt at adapting the canonical paths method to the log-Sobolev constant.

\end{enumerate}

\subsection{Related works}

We would like to highlight that a bound similar to Theorem \ref{thm:intro-main} was conjectured (though not proved) in Equation 65 of \cite{siva2017topological}, and served as a guiding light in the early days of this project. To formalise this conjecture, we rely on the framework established by \cite{temme2016thermalization}, which is an adaptation of the canonical path method -- see \cite{guruswami2016rapidly} for a gentle introduction, Chapter 5 of \cite{jerrum2003counting}, and Chapter 12.3 of \cite{jerrum1996markov}. The tools we use also owe much to \cite{alicki2009thermalization}, one of the earliest formal studies of the thermalisation time of quantum memories. 

We note that despite conceptually following \cite{temme2016thermalization}, we will often deviate from it when it comes to notation. This is done to align with conventions more commonly used in the QEC literature, with the hope of making the proofs accessible to the largest audience. 

\section{Free energy barrier of stabilizer Hamiltonians}

This work focuses on stabilizer Hamiltonians, which take their terms from the Pauli matrices ${\cP^n = \{\idty, X, Y, Z\}^{\otimes n}}$. We will occasionally also refer to the `single-site' Pauli matrices ${\cP_{single-site} \subset \cP^n}$; they consist of the Pauli matrices that act as the identity on all but one qubit. The group commutator between $A,B \in \cP^n$ is expressed as ${[A,B] = ABA^{-1}B^{-1}}$. The Hilbert space these operators act on is $\hbt = (\C^2)^{\otimes n}$, and we write $\bnb{\hbt}$ the set of bounded linear operators on $\hbt$. 

\begin{definition}[Stabilizer Hamiltonian]
\label{def:stabiliser-ham}
    We say that $H$ is a stabilizer Hamiltonian on $n$ qubits when
    \[
    H = - \sum_i J_i g_i
    \]
    where $J_i$ is a positive real number and $\{g_i\}_i$ is an abelian subset of $\cP^n$.

\end{definition}

The abelian group $\cS$ generated by the set of operators $\{g_i\}_i$ is called the \emph{stabilizer group} -- note that since $ - \idty \not\in \{g_i\}_i$ and it is abelian, then $ - \idty \not\in \cS $. The Hamiltonian is then said to encode $k = n-r$ logical qubits, where $r$ is the rank of the group $\cS$. Due to the abelian nature of $\cS$ we have the isomorphism $\cS \cong \Z_2^r$, or $\cS = \langle S_1, S_2, \dots, S_r \rangle$, for some set $\{S_1, \dots, S_r\}$ of Pauli operators \footnote{Note that while $\{S_1, \dots, S_r\}$ forms an independent basis for $\cS$, $\{g_i\}_i$ does not need to. }. These stabilizers allow us to assign a `syndrome' to every Pauli; the syndrome map $\sigma_P = ([P, S_1], [P,S_2], \dots, [P,S_r]) \in \{-1,1\}^{\otimes r}$ spits out a vector for every $P \in \cP^n$. For two syndrome vectors $\syndvec, \vec{u}$ we write $\syndvec \oplus \vec{u} \coloneq (s_1\cdot u_1, s_2 \cdot u_2, \dots, s_r \cdot u_r)$. This allows us to split the Hilbert space into $2^r$ syndrome subspaces, and the projectors onto these subspaces can be written as:
\begin{equation}
    \Pi_{\syndvec} = \prod_j \frac{1}{2}(\idty + s_j S_j ), \quad \syndvec \in \{-1,1\}^{\otimes r}
\end{equation}
We can then make use of this new basis to neatly diagonalize the Hamiltonian $H$. To each syndrome vector $\syndvec$ we can associate a number $\epsilon_{\syndvec}$, such that:
\begin{equation}
    H = \sum_{\syndvec \in \{-1,1\}^{\otimes r}}     \epsilon_{\syndvec} \, \Pi_{\syndvec}
\end{equation}
Writing $p(\syndvec) \coloneq \frac{e^{-\beta \epsilon_{\syndvec}}}{\tr(e^{-\beta H})}$, the Gibbs state then becomes:
\begin{equation}
    \rho_\beta = \frac{e^{-\beta H}}{\tr(e^{-\beta H})} = \sum_{\syndvec} p(\syndvec) \Pi_{\syndvec}
\end{equation}
The notion of energy barrier of \cite{bravyi2009no, temme2016thermalization, michnicki2012welded, williamson2024layer} attempts to quantify the energy cost associated with transitioning from the state $\ket{\psi}$ to the state $\ket{\psi'} \coloneq P \ket{\psi}$ for some Pauli $P \in \cP^n$. This transition is assumed to take place along a path defined by a sequence of Pauli operators $(\idty, U_1, U_2,  \dots, P)$; where $U_i^\dagger U_{i+1} \in \cP_{single-site}$. 
The steps of this transition correspond then to $\ket{\psi} \rightarrow U_1\ket{\psi} \rightarrow U_2 \ket{\psi} \rightarrow \dots \rightarrow P\ket{\psi} = \ket{\psi'}$. 
Depending on the nature of the path, the intermediary states might have more energy with respect to $H$ than the initial state: for this transition to occur, some energy cost has to be paid. Much like the least action principle, we aim to pick the path that incurs the smallest increase. Mathematically, a step $U$ is then said to cost $\bar{\epsilon}_{\ket{\psi} \rightarrow \ket{\psi'}} (U)$ units of energy, where:
\begin{equation}
    \bar{\epsilon}_{\ket{\psi} \rightarrow \ket{\psi'}} (U)= \tr(HU^\dagger \dyad{\psi} U) - \tr(H \dyad{\psi})
\end{equation}
However, if $\tr(H \dyad{\psi'}) \gg \tr(H \dyad{\psi})$ it is inevitable that $\bar{\epsilon}_{\ket{\psi} \rightarrow \ket{\psi'}}$ blows up, even for an `optimal' path; and conversely, $\bar{\epsilon}_{\ket{\psi'} \rightarrow \ket{\psi}}$ will eventually trend negative. We can solve our issue by having these positive and negative anomalies cancel each other out. A more proper definition of an energy barrier between states $\ket{\psi}$ and $\ket{\psi'}$ is then:
\begin{align*}
    \bar{\epsilon}_{\ket{\psi} \rightleftarrows \ket{\psi'}}(U) &= \bar{\epsilon}_{\ket{\psi} \rightarrow \ket{\psi'}}(U) + \bar{\epsilon}_{\ket{\psi'} \rightarrow \ket{\psi}}(U) \\
    &= \tr(\dyad{\psi}(P^\dagger U^\dagger H U  P  + U^\dagger H U  - P^\dagger H P - H)) \\
    & \leq \norm{P^\dagger U^\dagger H U  P   + U^\dagger H U  - P^\dagger H P - H}_\infty
\end{align*}
The last row allows us to have a state-independent bound on the energy barrier. This motivates the definition\footnote{This definition differs from that of \cite{temme2016thermalization} by a factor $2$.}:
\begin{equation}
    \bar{\epsilon}_P(U) \coloneq \norm{P^\dagger U^\dagger H U  P   + U^\dagger H U  - P^\dagger H P - H}_\infty
\end{equation}
The structure of stabilizer Hamiltonians allows us to derive an easily computable upper bound on $\bar{\epsilon}_P(U)$:
\begin{proposition} Let $H$ be a stabilizer Hamiltonian according to Definition \ref{def:stabiliser-ham}, then we have the following upper bound on $\bar{\epsilon}_P(U)$
    \[
    \bar{\epsilon}_P(U) \leq 4\sum_{i: [g_i, P] = 1, [g_i, U] =-1} |J_i| 
    \]
\end{proposition}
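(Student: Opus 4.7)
The plan is to expand $P^\dagger U^\dagger H U P + U^\dagger H U - P^\dagger H P - H$ term by term using the decomposition $H = -\sum_i J_i g_i$, and then to exploit the fact that conjugation of a Pauli by another Pauli produces only a sign. First I would record the elementary identity that for Paulis $A, B \in \cP^n$ we have $B^\dagger A B = A$ if $[A,B] = 1$ and $B^\dagger A B = -A$ if $[A,B] = -1$. Applied to each $g_i$, this means that every one of the operators $P^\dagger U^\dagger g_i U P$, $U^\dagger g_i U$, $P^\dagger g_i P$, and $g_i$ is $\pm g_i$, with the sign determined entirely by the commutation pattern of $g_i$ with $P$ and with $U$.

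Next I would run the case analysis over the four possibilities for $([g_i, P],[g_i, U]) \in \{1,-1\}^2$. Since the first term picks up the product of signs from conjugation by $U$ and then by $P$, a short calculation shows that the combination
\[
P^\dagger U^\dagger g_i U P + U^\dagger g_i U - P^\dagger g_i P - g_i
\]
vanishes whenever $[g_i, P] = -1$ (the $P$-conjugations flip sign and cancel the $U$-conjugated pair) and also whenever $[g_i, U] = 1$ (in which case $U$ simply disappears and the four terms cancel pairwise). Only the case $[g_i, P] = 1$ and $[g_i, U] = -1$ survives, and there the expression equals $-4 g_i$. This is the key algebraic step, and the signs must be tracked carefully; beyond that there is no real obstacle.

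Summing over $i$ with the coefficients $-J_i$ gives
\[
P^\dagger U^\dagger H U P + U^\dagger H U - P^\dagger H P - H \;=\; 4 \sum_{i : [g_i, P] = 1,\, [g_i, U] = -1} J_i\, g_i.
\]
The stated bound then follows immediately from the triangle inequality for $\norm{\cdot}_\infty$ together with $\norm{g_i}_\infty = 1$, which holds since each $g_i$ is a tensor product of single-qubit Paulis. No further ingredients are needed.
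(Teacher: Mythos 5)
Your proof is correct and follows essentially the same route as the paper: expand $H=-\sum_i J_i g_i$, replace each Pauli conjugation by the corresponding sign $[g_i,\cdot]$, and observe that the only surviving contribution is in the case $[g_i,P]=1$, $[g_i,U]=-1$, giving $4\sum_{i:[g_i,P]=1,[g_i,U]=-1}J_i g_i$ before the triangle inequality. The paper packages the four-way case analysis as the algebraic factorization $(1-[g_i,U])(1+[g_i,P])$, but the underlying computation is identical.
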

\begin{proof}
The structure of $H$ allows us to simplify the expression of $\bar{\epsilon}_P(U)$:
    \begin{align*}
    P^\dagger U^\dagger H U  P  + U^\dagger H U  - P^\dagger H P - H
    &=   - \sum_i J_i P^\dagger U^\dagger g_i U P - \sum_i J_i U^\dagger  g_i U  +  \sum_i J_i P^\dagger g_i P +  \sum_i J_i  g_i   \\
    &=   \sum_i J_i g_i \left(1 - [ g_i,U] + [g_i, P] - [g_i, UP] \right) \\
    &=   \sum_i J_i g_i \left(1 - [ g_i,U] + [g_i, P] - [g_i, U]\cdot [g_i, P] \right) \\
    &=   \sum_{i} J_i g_i (1 - [ g_i,U] ) (1+ [g_i, P]  )
    \\
    &=   2\sum_{i: [g_i, P] = 1} J_i g_i (1 - [ g_i,U] )
    \\
    &=   4\sum_{i: [g_i, P] = 1, [g_i, U] =-1} J_i g_i 
\end{align*}
The claimed bound results from the triangle inequality, and the fact that $\norm{g_i}_\infty = 1$
\end{proof}

Our improvement on the bound of \cite{temme2016thermalization} relies on the idea of using \emph{multiple} Pauli paths, while controlling the energy barrier associated with each of these paths. The necessity of considering a multiplicity of paths naturally leads to the definition of a \emph{Pauli flow}.

\begin{definition}[Pauli flow]
\label{def:pauli-flow}
     Consider a stabilizer Hamiltonian $H$ on $n$ qubits. A \emph{flow} for this Hamiltonian is a collection of paths $ \Gamma_{P} = \{ \ppath \}_\ppath$ for every Pauli $P \in \cP^n$, such that for every path $\ppath \in \Gamma_P$:

     \begin{enumerate}
         \item $\ppath = \left(U_{0}, U_{1}, U_{2}, \dots, U_{T} \right )$
         \item By convention $\gamma$ starts at the identity, $U_0 = \idty$
         \item The operators $U_{t-1}$ and $U_t$ only differ by a single-site Pauli, $U_t = \alpha \cdot U_{t-1}, \alpha \in \singlesites$ 
         \item The path $\gamma$ ends at the target Pauli, $U_T = P$
     \end{enumerate}
     An edge $e$ is a sequence $e=(U_e,V_e)$ where both $U_e, V_e \in \cP^n$, and $U_e^\dagger V_e \in \cP_{single-site}$; and we write $e \in \gamma = \left(U_{0}, U_{1}, U_{2}, \dots, U_{T} \right )$ if there exists $t_*$ such that $U_e = U_{t_*}, V_e = U_{t_*+1}$. In simple terms an edge is a pair of consecutive Paulis in a path.

     Finally, we write $|\gamma| = T$ the length of a path. Naturally, we can take $T \leq n$.
\end{definition}

In \cite{temme2016thermalization} it is shown that given a unique canonical path per Pauli $P$, i.e. $|\Gamma_P| = 1$, the thermalisation time is at most $t_{\text{mix}} \leq \cO(\beta n^2 e^{\bar{\epsilon}})$. One can reexpress this as a ``thermalisation rate" of $1/t_{\text{mix}} \geq 1/\cO(\beta n^2 e^{\bar{\epsilon}})$. If $|\Gamma_P|$ is allowed to grow, we might then approximate that the thermalisation event happens as many times as there are paths, i.e. the rate would transform multiplicatively: $ 1/t_{\text{mix}} \geq |\Gamma_P| \cdot \cO(\beta n^2 e^{\bar{\epsilon}})$

This musing hopefully justifies why $|\Gamma_P|$ of paths serves as natural proxy for multiplicity. 
Writing $\Omega = |\Gamma_P|$, it is reasonable to expect a bound of the form $t_{\mathrm{mix}} \lesssim poly(n) e^{\beta\bar{\epsilon}}/\Omega = poly(n) e^{\beta\bar{\epsilon} - \log(\Omega)}$, where $\bar{f} \coloneq \bar{\epsilon} - \beta^{-1} \log(\Omega)$ serves as a natural free energy analogue. 

However, as it stands this derivation is deeply flawed, and the resulting bound not well posed. In order to be meaningful $\Omega$ needs to discriminate between paths that are `similar', which calls for a more careful definition of the free energy associated with a given flow.

\begin{definition}[Free Energy Barrier]
\label{def:free-energy}
     A Pauli flow is said to have free energy at most $\bar{f}$ at inverse temperature $\beta$ when

    \[
        \max_P \max_{e : e\in \gamma, \gamma \in \Gamma_P} \ \bar{\epsilon}_P(U_e) + \frac{1}{\beta}\log\left(\frac{1}{\Omega_{P}(e) } \right)\leq \bar{f}
    \]

     where 
    % \[
    %  \epsilon(U_e) = \sum_{k=1}^M 2\left|J_k\right| e_k\left({U_e}\right) , 
    % \]
    \[
    \frac{1}{\Omega_{P}(e) } \coloneq \frac{|\Gamma_P^{\cap e}|}{  |\Gamma_{P} |  }, \quad \Gamma_P^{\cap e} = \{\gamma : \gamma \in \Gamma_{P}, \gamma \ni e\} 
    \]

    The Hamiltonian $H$ is said to have \emph{free energy barrier} $f$ when it is not possible to find a flow with free energy lower than $f$.
\end{definition}

\textbf{Remark:}
Although Definition \ref{def:pauli-flow} technically requires to specify a set of paths for every Pauli $P \in \cP^n$, Lemma \ref{lem:pauli-flow-degen} lets us leverage the degeneracy of $H$ and only specify a path for every equivalence class $\cP^n / \cS$, at the cost of a small free-energy penalty.

\section{Thermalisation model: Davies Generator}
\label{sec:model}
Thermalisation can be characterised as the long term process during which a system is connected to a bath of particles at inverse temperature $\beta$. We assume that evolution of the system can be described by a Markovian map $\Lambda_t$ (i.e. $\Lambda_{t_1 + t_2} = \Lambda_{t_2} \circ \Lambda_{t_1}$), such that the state of the system after time $t$ is given by $\rho_t = \Lambda_t(\rho)$ -- we leave the dependence of $\Lambda_t$ on $\beta$ implicit. As we require the map be both continuous over time and Markovian, it obeys \cite{manzano2020short, roga2010davies, alicki2009thermalization, rivas2012quantum}:
\begin{equation}
    \Lambda_t (\rho) = e^{t \cL (\rho)}, \quad   \lim_{t \rightarrow \infty} \Lambda_t(\rho) = \rho_\beta
\end{equation}

For some $\cL: \bnb{\hbt} \rightarrow \bnb{\hbt}$. The Gibbs state $\rho_\beta = e^{-\beta H}/\tr(e^{-\beta H})$ corresponds to the stationary state of the thermalisation process \cite{alhambra2023quantum}. The choice of $\cL$ corresponds to a specific choice of assumptions regarding how the system is connected to a thermal bath \cite{mozgunov2020completelypositive}. Here we elect to use the `Davies generator', as it has served as the standard thermalization model in previous investigations regarding self-correcting quantum memories \cite{alicki2008thermal}.

The exact form of the Liouvillian $\cL$ is a little bit simpler to describe in the Heisenberg picture. Remember that with respect to the Hilbert-Schmidt inner product, the adjoint of $\cL$ is the map $\cL^*$ such that $\tr(A^\dagger \cL(B)) = \tr(\cL^*(A)^\dagger B)$.
In practice, we mostly consider situations where $A$ and $B$ are Hermitian operators. Considering the Hermitian-preserving nature of $\cL$ (See \cite{kasatkin2023whichdifferential}, Proposition 4) we obtain $\tr(A \cL(B)) = \tr(\cL^*(A)B)$.
We can now express the Liouvillian as \cite{alicki2009thermalization}:
\begin{equation}
    \cL^*(f) = i[H,f] + \sum_{\omega, \alpha} h^\alpha(\omega) \left( {S_\omega^\alpha}^\dagger f S_\omega^\alpha - \frac{1}{2}\{{S_\omega^\alpha}^\dagger{S_\omega^\alpha}, f\} \right )
\end{equation}
\begin{equation}
    S_\omega^\alpha = \sum_{\syndvec: \omega^\alpha(\syndvec) = \omega} \alpha\  \Pi_{\syndvec}, \quad \omega^\alpha(\syndvec) = \epsilon_{\syndvec} - \epsilon_{\syndvec \oplus \sigma(\alpha)}
\end{equation}
The exact form of the functions $h^\alpha(\omega)$ depends on the microscopic details of the bath and is largely irrelevant to the rest of our discussion. We are only interested in the fact they satisfy the so-called KMS condition \footnote{Which ensures the Gibbs state is the unique fixed state of $\Lambda_t$.}: $h^\alpha(-\omega) = h^\alpha(\omega)e^{-\beta \omega}$.
It is generally taken that there exist constants $c_\circ,C_\circ$ independent of the system size such that $0 < c_\circ \leq h^\alpha(\omega) \leq C_\circ$ \cite{alicki2008thermal, alicki2009thermalization}.

We recall the notions of Dirichlet form and variance of $\cL$, as they will take a central role in the rest of this work.
\begin{definition}[Dirichlet form and variance]
\label{def:forms}
    For a Liouvillian $\cL$, we define $\cV$ the variance, and $ \cE$ the Dirichlet form of $\cL$:
    \[
    \cV(f) = \tr(\rho_\beta f^\dagger f) - |\tr(\rho_\beta f)|^2, \quad \cV: \bnb{\hbt}\rightarrow \C
    \]
    \[
    \cE(f) = - \tr(\rho_\beta f^\dagger \cL^*(f)), \quad  \cE: \bnb{\hbt}\rightarrow \C
    \]
\end{definition}

\subsection{Support number}

The method we will use to bound the thermalisation time of the Liouvillian $\cL$ is derived from analysing its spectrum. In the above section, we presented $\cL$ as mapping operators to operators, and it might not be clear what its `spectrum' might be referring to. Before progressing any further, we will describe how $\cL$ can be explicitly expressed as a linear operator over Pauli matrices (also referred to as a `superoperator'), whose spectrum is then a lot more intuitive to grasp.

Pauli operators are a basis for the space of complex matrices. Any operator $O : {\C^2}^n \rightarrow  {\C^2}^n$ can be represented as a vector $\cket{O} \in {\C^{2^n}}\otimes  {\C^{2^n}}$. Therefore any linear operator $\cT: O \rightarrow O'$ can represented as a matrix $\cT: {\C^{2^n}}\otimes  {\C^{2^n}} \rightarrow {\C^{2^n}}\otimes  {\C^{2^n}}$. The explicit form of these vectors can be expressed as:
\begin{equation}
    \cket{O} = \frac{1}{2^n} \sum_{P \in \cP^n} \tr(P O) \cket{P}
\end{equation}
\begin{equation}
    \hat{\cT} = \frac{1}{2^n} \sum_{P,P' \in \cP^n}\tr(P'\cT(P))\cdyadm{P'}{P} 
\end{equation}
The inner product of these vectors is related to the Hilbert-Schmidt inner product of the corresponding operators\footnote{By using the normalised Pauli operators $\{\idty/\sqrt{2}, X/\sqrt{2}, Y/\sqrt{2}, Z/\sqrt{2}\}^{\otimes n}$, one would instead obtain a basis such that $\tr(A^\dagger B) = \cbraket{A}{B}$. We decided to keep with the usual matrices for the sake of simplicity. See \cite{innocenti2023shadow} for more on operator frames.}:
\begin{align*}
    \tr(A^\dagger B) &= \frac{1}{2^n\cdot 2^n} \tr( \sum_{P,P' \in \cP^n} \tr(A^\dagger P)P \tr(BP') P') \\
    &=\frac{1}{2^n \cdot 2^n} \sum_{P,P' \in \cP^n} \tr(AP)^* \tr(BP') \tr(PP') \\
    &= \frac{2^n}{2^n \cdot 2^n} \sum_{P \in \cP^n} \tr(A P)^* \tr(BP) \\
    &= 2^n \cbraket{A}{B}
\end{align*}

We are now in position to define the support number of $\cL$. Intuitively, the support number is equivalent to $\lambda^{-1}$, with $\lambda$ the spectral gap of $\cL$. The primary conceptual difference between these object is that the support number is instead defined in function of the Dirichlet form and the variance.

\begin{definition}[Support number] 
\label{def:support-number}
For an ordered pair $(\hat{\cE} , \hat{\cV})$ of matrices, the support number is defined as:
    \[
    \min \{\tau \in \R: \tau\hat{\cE} - \hat{\cV} \geq 0\}
    \]
    For a Lindbladian $\cL$ whose Dirichlet form and variance (respectively) correspond to $\hat{\cE}$ and $\hat{\cV}$, then $\tau$ is said to be the support number of $\cL$.
\end{definition}

The support number can then be linked to the thermalisation.

\begin{theorem}[\cite{temme2013lowerbounds, temme2016thermalization}]
\label{thm:support-nb-mixing}
    Let $\cL$ be a Liouvilian with stationary state $\rho_\beta$ and support number $\tau$, then the following bound holds:
    \[
    \forall \sigma_0, \quad \norm{e^{t\cL}(\sigma_0) - \rho_\beta}_1 \leq \sqrt{\norm{\rho^{-1}_\beta}_\infty} e^{-t/\tau}
    \]
    Where $\sigma_0$ is a density matrix.
\end{theorem}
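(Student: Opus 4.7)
The plan is to turn the support-number inequality into an exponential decay of the variance along the Heisenberg evolution, and then translate that decay into a trace-norm bound on the Schrödinger evolution via a Cauchy--Schwarz argument. The only ingredient I need beyond the excerpt is the quantum detailed balance of Davies generators: the KMS condition $h^\alpha(-\omega) = h^\alpha(\omega)e^{-\beta\omega}$ makes $\cL^*$ self-adjoint with respect to the KMS inner product $\langle f, g\rangle_\beta \coloneq \tr(\rho_\beta f^\dagger g)$, which in turn yields $\frac{d}{dt}\cV(f_t) = -2\cE(f_t)$ for $f_t \coloneq e^{t\cL^*}(f_0)$. Combined with $\cV(f_t) \leq \tau\,\cE(f_t)$ from Definition \ref{def:support-number}, Grönwall's inequality gives
\[
\cV(f_t) \leq e^{-2t/\tau}\,\cV(f_0).
\]

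Next, by duality of the trace norm, $\norm{\sigma_t - \rho_\beta}_1 = \sup_{\norm{A}_\infty \leq 1,\, A = A^\dagger}|\tr(A(\sigma_t - \rho_\beta))|$ where $\sigma_t \coloneq e^{t\cL}(\sigma_0)$. Setting $f_t = e^{t\cL^*}(A)$ and $\bar{A} \coloneq \tr(\rho_\beta A)$, the adjoint relation and $\tr(\sigma_0) = 1$ give
\[
\tr(A(\sigma_t - \rho_\beta)) = \tr\bigl((f_t - \bar{A}\idty)\rho_\beta^{1/2}\cdot \rho_\beta^{-1/2}\sigma_0\bigr),
\]
and Cauchy--Schwarz for the Hilbert--Schmidt inner product factorises this as $\sqrt{\cV(f_t)}\cdot\sqrt{\tr(\rho_\beta^{-1}\sigma_0^2)}$, using stationarity $\bar{A} = \tr(\rho_\beta f_t)$ for all $t$ to recognise the first factor as the variance. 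The first factor decays exponentially by the previous step and is bounded at $t = 0$ by $\cV(f_0) \leq \tr(\rho_\beta A^2) \leq \norm{A}_\infty^2 \leq 1$, while the second factor is bounded by $\norm{\rho_\beta^{-1}}_\infty \tr(\sigma_0^2) \leq \norm{\rho_\beta^{-1}}_\infty$. Taking the supremum over $A$ then yields the claim.

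The main obstacle is establishing the identity $\frac{d}{dt}\cV(f_t) = -2\cE(f_t)$. This is standard (see e.g.\ the Alicki references in the excerpt) but requires a short computation: one pairs the jump operators $S^\alpha_\omega$ with $S^\alpha_{-\omega}$ using the KMS condition, observes that the coherent term $i[H,\cdot]$ in $\cL^*$ does not contribute to $\cE$ since $[H, \rho_\beta] = 0$, and verifies that the remaining dissipative piece reproduces the symmetric bilinear form of Definition \ref{def:forms}. Everything else reduces to bookkeeping around the two Cauchy--Schwarz factors.
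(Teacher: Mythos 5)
The paper does not supply a proof of this theorem — it is imported by citation from Temme et al. Your argument is, however, the standard proof that appears in those references, and it is correct. The key steps all check out: stationarity gives $\tr(\rho_\beta f_t) = \bar A$ for all $t$, so the centred Hilbert–Schmidt norm $\tr\bigl((f_t-\bar A\idty)\rho_\beta(f_t-\bar A\idty)\bigr)$ is exactly $\cV(f_t)$ by cyclicity of the trace; the insertion of $\rho_\beta^{1/2}\rho_\beta^{-1/2}$ and Cauchy–Schwarz splits $|\tr(A(\sigma_t-\rho_\beta))|$ into $\sqrt{\cV(f_t)}\sqrt{\tr(\rho_\beta^{-1}\sigma_0^2)}$, and the two boundary estimates $\cV(f_0)\le\norm{A}_\infty^2\le 1$ and $\tr(\rho_\beta^{-1}\sigma_0^2)\le\norm{\rho_\beta^{-1}}_\infty$ are both right.

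Two small remarks that would tighten the exposition if you spell out the deferred computation. First, for the identity $\frac{d}{dt}\cV(f_t)=-2\cE(f_t)$ to hold with $\cE$ exactly as in Definition \ref{def:forms}, note that the coherent term $i[H,\cdot]$ contributes $-i\bigl(\tr(\rho_\beta fHf)-\tr(\rho_\beta f^2H)\bigr)$ to $\cE(f)$, which is purely imaginary (both traces are real for Hermitian $f$), while its contribution to $\frac{d}{dt}\cV$ vanishes exactly by $[H,\rho_\beta]=0$ — so what is really used is that the \emph{real part} of $\cE$ is the dissipative Dirichlet form, and that detailed balance makes the jump part of $\cL^*$ self-adjoint in the GNS inner product $\tr(\rho_\beta f^\dagger g)$ (the one actually used in Definition \ref{def:forms}), not the symmetric KMS inner product $\tr(\rho_\beta^{1/2}f^\dagger\rho_\beta^{1/2}g)$. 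Second, the step from the Poincaré inequality $\cV(f)\le\tau\cE(f)$ to exponential decay implicitly restricts to Hermitian test operators $A$; this is enough because the trace-norm duality supremum can be taken over Hermitian contractions, but it is worth saying.
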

Observe that for qubit systems, the Gibbs state satisfies $\norm{\rho_\beta^{-1}}_\infty \leq e^{c_0\beta\norm{H}_\infty}$ for a universal constant $c_0$. With the help of some algebraic manipulations, Theorem \ref{thm:support-nb-mixing} guarantees that:
\begin{equation}
\label{eq:mixing-time}
    \forall t_* \geq t_{\mathrm{mix}} \coloneq \tau \left( c_0 \beta\norm{H}_\infty \allowbreak + \log(4) \right), \quad \norm{e^{t_*\cL}(\sigma_0) - \rho_\beta}_1 \leq \frac{1}{4}
\end{equation}

This mixing time $t_{\mathrm{mix}}$ can be understood as an upper bound on the coherent time of a system. Once passed that threshold it is impossible to distinguish between $e^{t_*\cL}(\sigma_0)$ and $e^{t_*\cL}(\sigma_1)$ with an accuracy greater than 50\%. 

\section{Thermalisation time bound}

As the previous section suggests, our primary objective is to upper bound the support number $\tau$, which itself will upper bound the thermalisation time $t_{\mathrm{mix}}$. However the minimisation process described by Definition \ref{def:support-number} is impractical. We will instead exploit a more amenable bound, first used by \cite{temme2016thermalization}.
\begin{lemma}[\cite{boman2003support,chen2005obtaining}]
\label{lem:factorisation-bound}
    Let $\hat{\cE} = A A^\dagger$, and $\hat{\cV}= BB^\dagger$ be a pair of matrices. Further, let $AW=B$. Then the support number $\tau$ of the pair $(\hat{\cE} , \hat{\cV})$ is bounded as:
    \[
    \tau \leq \max_m \sum_{k: W_{km}\neq 0} \norm{\ket{w_k}}^2_2
    \]
    Where we use the notation $W \coloneq \sum_{k=1}^K \sum_m W_{k,m}\dyad{k}{m}$ and $\ket{w_k}$ denotes the row vectors $\ket{w_k} = \sum_{m=1}^M W_{k,m} \ket{m}$.
\end{lemma}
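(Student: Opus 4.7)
The plan is to reduce the semidefinite condition defining $\tau$ to a bound on the operator norm of $W$, and then to extract that bound from a carefully weighted Cauchy--Schwarz inequality.

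First, I would use the hypothesis $AW=B$ to rewrite $\hat{\cV} = BB^\dagger = AWW^\dagger A^\dagger$, so that
\[
\tau \hat{\cE} - \hat{\cV} = A\bigl(\tau I - WW^\dagger\bigr)A^\dagger.
\]
For this difference to be positive semidefinite it suffices that $\tau I - WW^\dagger \succeq 0$, i.e.\ $\norm{W}_{\mathrm{op}}^2 \leq \tau$. The remainder of the argument is therefore devoted to showing $\norm{W}_{\mathrm{op}}^2 \leq \tau^* := \max_m \sum_{k:\,W_{km}\neq 0}\norm{\ket{w_k}}_2^2$.

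Second, I would unfold $\norm{W^\dagger v}_2^2 = \sum_m \bigl|\sum_k \overline{v_k} W_{km}\bigr|^2$ and apply Cauchy--Schwarz to the inner sum with the splitting $\overline{v_k} W_{km} = \norm{\ket{w_k}}_2 \cdot \bigl(\overline{v_k} W_{km}/\norm{\ket{w_k}}_2\bigr)$, restricted to $k \in S_m := \{k : W_{km}\neq 0\}$ (indices $k$ with $\norm{\ket{w_k}}_2 = 0$ contribute nothing). This yields
\[
\Bigl|\sum_{k} \overline{v_k} W_{km}\Bigr|^2 \leq \Bigl(\sum_{k \in S_m} \norm{\ket{w_k}}_2^2\Bigr)\Bigl(\sum_{k \in S_m} \frac{|v_k|^2 |W_{km}|^2}{\norm{\ket{w_k}}_2^2}\Bigr),
\]
and the first factor is uniformly bounded by $\tau^*$ by definition.

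Third, summing the resulting inequality over $m$ and swapping the order of summation in the remaining factor, the row-norm identity $\sum_m |W_{km}|^2 = \norm{\ket{w_k}}_2^2$ precisely cancels the denominator, leaving $\sum_k |v_k|^2 = \norm{v}_2^2$. This gives $\norm{W^\dagger v}_2^2 \leq \tau^* \norm{v}_2^2$, hence $\norm{W}_{\mathrm{op}}^2 \leq \tau^*$, and combining with the reduction of the first paragraph concludes the proof.

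The main obstacle is identifying the correct Cauchy--Schwarz weighting in the second step. The weights $\norm{\ket{w_k}}_2$ and $|W_{km}|/\norm{\ket{w_k}}_2$ are dictated by two competing requirements: the $k$-dependent factor in the first bracket must reduce to the quantity $\tau^*$ appearing in the statement, and the $m$-sum of the second bracket must telescope via $\sum_m |W_{km}|^2 = \norm{\ket{w_k}}_2^2$ back to $\norm{v}_2^2$. Once that splitting is spotted the remainder is bookkeeping, and the $\succeq 0$ condition on $\tau\hat{\cE}-\hat{\cV}$ then follows automatically by conjugation with $A$.
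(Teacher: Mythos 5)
Your proof is correct. The paper itself does not prove this lemma but cites it from Boman--Hendrickson and Chen--Toledo; those works follow the same reduction you give, namely conjugating $\tau\hat{\cE}-\hat{\cV}=A(\tau I - WW^\dagger)A^\dagger$ to reduce the support-number bound to $\norm{W}_{\mathrm{op}}^2\leq\tau^*$, and then bounding $\norm{W}_{\mathrm{op}}^2$ by a Cauchy--Schwarz argument weighted by the row norms so that $\sum_m |W_{km}|^2=\norm{\ket{w_k}}_2^2$ makes the sum telescope. Your argument is a clean, self-contained rendition of that standard approach, with the sparsity restriction $k\in S_m$ handled correctly so that zero rows never appear in a denominator.
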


Given a stabilizer subgroup $\cS \subset \cP^n$ on $n$ qubits, we can define the group quotient $\cP^n/\cS$. We will usually denote an equivalence class representatives as $\equivclass$, and $[\equivclass]$ is the equivalence class corresponding to $\equivclass$, i.e. $[\equivclass] = \equivclass \cdot \cS$. 
We also introduce a new orthonormal basis $\{\cket{\equivclass,{\syndvec}}\}_{\syndvec \in \Z_2^r}$ for the subspace $\C[\equivclass\cdot \cS] \subset \C[\cP^n]$ \footnote{In \cite{temme2016thermalization}, there is an interesting rationalisation to the introduction of that new basis. Consider a given equivalence class $G \coloneq [\equivclass]$. It is an abelian subgroup isomorphic to $\Z_2^r$, and therefore so is its dual group $\widehat{G}$ -- remember that the dual group also forms a basis for the space $\C[G]$ \cite{babai2002fourier}. Temme observed that the character associated with every $\syndvec \in \Z_2^r$ corresponds to $\chi_{\syndvec}(S\equivclass) = \tr(\Pi_{\syndvec} S)/2^n$, or $\chi_a(Gx\oplus\gamma_0) = e^{i \pi \langle x,a\rangle}$ in the notation of \cite{temme2016thermalization}. Applying standard Fourier theory \cite{babai2002fourier} then gives the following Fourier basis for the space $\C[G]$: ${\{ \Pi_{\syndvec} \equivclass = \sum \tr(\Pi_{\syndvec} S)/2^n S\equivclass \quad :\forall \ \syndvec \in \Z_2^r \}}$. }:
% https://fse.studenttheses.ub.rug.nl/12836/1/Bachelor_Project___Fourier_Tra_1.pdf
\begin{equation}
    \cket{\equivclass, \syndvec} = \frac{1}{|\cket{\Pi_{\syndvec} \, \equivclass}|}\cket{\Pi_{\syndvec} \, \equivclass}, \quad \cket{\Pi_{\syndvec} \, \equivclass} = \frac{1}{2^n} \sum_{P \in \cP^n} \tr(P \Pi_{\syndvec} \, \equivclass) \cket{P} 
\end{equation}
In order to lighten the notation, we will write $\cket{\equivclass_{\vec{s}}} \coloneq \cket{\equivclass, \syndvec}  $; in a sense the operator $\Pi_{\syndvec} \, \equivclass$ `evaluates' the operator $\equivclass$ in the subspace corresponding to $\Pi_{\syndvec}$. With this new basis we have a way to encode the relationship between $\Pi_{\syndvec}\,\equivclass$ and $\alpha\Pi_{\syndvec}\,\equivclass \alpha = [\alpha, \equivclass] \Pi_{\syndvec \oplus \synd{\alpha}}  \equivclass $, which we encode in the state $\cket{\alpha \circ {\equivclass}_{\vec{s}} }$ defined as follows:
\begin{equation}
    \cket{\alpha \circ {\equivclass}_{\vec{s}} } \coloneq \frac{1}{\sqrt{2}} \Bigl[  \cket{{\equivclass}_{\vec{s}} } -  \theta_{\alpha, \equivclass} \cket{{\equivclass}_{\syndvec \oplus \synd{\alpha}}} \Bigr]
\end{equation}
Where we write $\theta_{\alpha,O} \coloneq [\alpha, O]$. These vectors satisfy a very useful `telescopic' property. 
\begin{proposition}[See Proposition 9 of \cite{temme2016thermalization}]
\label{prop:telescopic}
Let $P \in \cP^n$, and $\gamma = \left(U_{0}, U_{1}, U_{2}, \dots, U_{T} \right )\in \Gamma_P$, then:
    \[\cket{P \circ {\equivclass}_{\vec{s}} } = \sum_{t=0}^{|\gamma|-1} \theta_{U_t, \tilde{O}} \cket{\alpha_{t+1} \circ \tilde{O}_{\syndvec \oplus \synd{U_tQ}}}\]
    Where we define $\alpha_{t+1} = U_t^\dagger U_{t+1}$.
\end{proposition}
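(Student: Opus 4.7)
The plan is to prove Proposition \ref{prop:telescopic} by a direct expansion of the right-hand side using the definition of $\cket{\alpha \circ \equivclass_{\syndvec}}$, followed by a standard telescopic cancellation. The two algebraic facts that make everything work are that the syndrome map is a group homomorphism ($\synd{AB} = \synd{A}\oplus\synd{B}$ for $A,B\in\cP^n$) and that the group commutator of Paulis is multiplicative in its first argument ($\theta_{AB,\equivclass} = \theta_{A,\equivclass}\cdot\theta_{B,\equivclass}$); both follow immediately from Paulis commuting up to a sign.

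First, I would expand the summand. Since $\alpha_{t+1} = U_t^\dagger U_{t+1}$, the definition gives
\begin{align*}
\theta_{U_t,\equivclass}\,\cket{\alpha_{t+1}\circ \equivclass_{\syndvec\oplus\synd{U_t}}}
&= \frac{\theta_{U_t,\equivclass}}{\sqrt{2}}\Bigl[\cket{\equivclass_{\syndvec\oplus\synd{U_t}}} - \theta_{\alpha_{t+1},\equivclass}\cket{\equivclass_{\syndvec\oplus\synd{U_t}\oplus\synd{\alpha_{t+1}}}}\Bigr].
\end{align*}
Using the homomorphism property $\synd{U_t}\oplus\synd{\alpha_{t+1}} = \synd{U_t\alpha_{t+1}} = \synd{U_{t+1}}$, and the multiplicativity $\theta_{U_t,\equivclass}\cdot\theta_{\alpha_{t+1},\equivclass} = \theta_{U_{t+1},\equivclass}$, the summand rewrites as
\[
\frac{1}{\sqrt{2}}\Bigl[\theta_{U_t,\equivclass}\cket{\equivclass_{\syndvec\oplus\synd{U_t}}} - \theta_{U_{t+1},\equivclass}\cket{\equivclass_{\syndvec\oplus\synd{U_{t+1}}}}\Bigr].
\]

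Next, summing over $t = 0, \dots, |\gamma|-1$, the expression telescopes and collapses to
\[
\frac{1}{\sqrt{2}}\Bigl[\theta_{U_0,\equivclass}\cket{\equivclass_{\syndvec\oplus\synd{U_0}}} - \theta_{U_{|\gamma|},\equivclass}\cket{\equivclass_{\syndvec\oplus\synd{U_{|\gamma|}}}}\Bigr].
\]
Finally, invoking the path boundary conditions $U_0 = \idty$ (so $\theta_{U_0,\equivclass}=1$ and $\synd{U_0}$ is the trivial syndrome, leaving $\syndvec$ unchanged) and $U_{|\gamma|} = P$ recovers exactly
\[
\frac{1}{\sqrt{2}}\Bigl[\cket{\equivclass_{\syndvec}} - \theta_{P,\equivclass}\cket{\equivclass_{\syndvec\oplus\synd{P}}}\Bigr] = \cket{P\circ\equivclass_{\syndvec}},
\]
as required.

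There is no real obstacle here: the proof is essentially a bookkeeping exercise, and the only thing to watch is the sign/phase tracking when combining the $\theta$ factors, which is handled uniformly by the multiplicativity of Pauli commutators. No assumptions about energetics or about the particular path (beyond the single-site-step condition already built into Definition \ref{def:pauli-flow}) are needed for this proposition.
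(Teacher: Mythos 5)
Your proof is correct and is the expected direct argument; the paper itself does not write out a proof but simply cites Proposition 9 of \cite{temme2016thermalization}, and a telescoping sum using the group-homomorphism property of the syndrome map together with the multiplicativity of the Pauli commutator in its first argument is precisely how that result is obtained. One bookkeeping point: you silently replaced $\synd{U_t Q}$ on the right-hand side with $\synd{U_t}$, i.e.\ you proved the $Q = \idty$ case of the statement. This is harmless because your identity holds for arbitrary $\syndvec$, so the general form follows by the substitution $\syndvec \mapsto \syndvec \oplus \synd{Q}$ (and indeed the left-hand side as used inside Lemma~\ref{lem:factorisation} is $\cket{P \circ \tilde{O}_{\syndvec \oplus \synd{Q}}}$, consistent with that substitution); a sentence flagging this would make the accounting fully explicit.
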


%\noue{encode the evolution from $\Pi_{\syndvec} \tilde{O}$ to ???? maybe to $\alpha \Pi O \alpha$ ? Yep}

%Also, although our notation for $\cket{P}$ is nominally different, it's actually the same.
%The point where the argument bifurcates from 1412.2858 is when it comes to stating Lemma 12. We introduce some more notation. We write $\synd{P}$ the vector associated with the syndrome of the Pauli operator $P$.

With these elements in place we can start discussing the factorisation of $\hat{\cE}$ and $\hat{\cV}$.
\begin{lemma}
\label{lem:matrice-forms}
   Define:
   \begin{equation}
    \hat{\cE}'_{\tilde{O}}(\syndvec) = \sum_{Q \in \cP^n} \sum_\alpha \frac{1}{4}h^\alpha(\omega^\alpha (\syndvec \oplus \sigma_Q))p(\syndvec \oplus \sigma_Q) \cdyad{\alpha \circ \tilde{O}_{\syndvec \oplus \synd{Q}} }
\end{equation}
\begin{equation}
    \hat{\mathcal{V}}_{\tilde{O}}(\syndvec) = \sum_{Q} \sum_{P} {\frac{1}{2^n} \syndromepb{\syndvec \oplus \synd{Q}} \syndromepb{\syndvec \oplus \synd{PQ}
     }} \cdyad{P \circ \tilde{O}_{\syndvec \oplus \synd{Q}}} 
\end{equation}

    Then $\tau \leq \max_{\tilde{O} \in \cP^n/\cS} \max_{\syndvec \in \Z_2^r} \tau_{\tilde{O}}(\syndvec)$, where $\tau_{\tilde{O}}(\syndvec)$ is the support number of the pair $(\hat{\cE}'_{\tilde{O}}(\syndvec), \hat{\mathcal{V}}_{\tilde{O}}(\syndvec))$.
\end{lemma}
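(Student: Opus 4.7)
The plan is to exploit a block-diagonal decomposition of both $\hat{\cE}$ and $\hat{\cV}$ induced by the quotient $\cP^n/\cS$, and then to compute each block explicitly. In the orthonormal Pauli basis $\{\cket{\tilde{O}_{\syndvec}}\}$, conjugation by a single-site $\alpha$ acts as $\alpha\,\Pi_{\syndvec}\tilde{O}\,\alpha = \theta_{\alpha,\tilde{O}}\,\Pi_{\syndvec\oplus\sigma_{\alpha}}\tilde{O}$, merely rephasing and shifting the syndrome while preserving the equivalence class $[\tilde{O}]$. Moreover, the Hamiltonian commutator $[H,\Pi_{\syndvec}\tilde{O}] = (\epsilon_{\syndvec} - \epsilon_{\syndvec\oplus\sigma_{\tilde{O}}})\Pi_{\syndvec}\tilde{O}$ is diagonal in this basis. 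Because every piece of $\cL^{*}$ is built from such conjugations with coefficients depending only on $p$, $h^{\alpha}$, $\omega^{\alpha}$, and because $\rho_{\beta}$ is diagonal in the syndrome projectors, both $\hat{\cE}$ and $\hat{\cV}$ act block-diagonally on $\bigoplus_{\tilde{O}\in\cP^{n}/\cS}\C[\tilde{O}\cdot\cS]$. For a block-diagonal pair of PSD matrices, the inequality $\tau\hat{\cE}-\hat{\cV}\succeq 0$ holds iff it holds on each block, so the full support number equals the maximum over $\tilde{O}$-blocks.

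Within a fixed $\tilde{O}$-block I would then assemble the block matrices explicitly. Applying $\cL^{*}$ to $\Pi_{\syndvec}\tilde{O}$ and pairing with $\rho_{\beta}f^{\dagger}$, the dissipative part of $\cE$ collapses into a sum of contributions proportional to $\frac{1}{2}\lVert\cket{\tilde{O}_{\syndvec}}-\theta_{\alpha,\tilde{O}}\cket{\tilde{O}_{\syndvec\oplus\sigma_{\alpha}}}\rVert^{2}$, i.e.\ to rank-one projectors on the vectors $\cket{\alpha\circ\tilde{O}_{\syndvec\oplus\sigma_{Q}}}$. The KMS detailed-balance $h^{\alpha}(-\omega)=h^{\alpha}(\omega)e^{-\beta\omega}$ is exactly what is needed to symmetrise the two orderings of the transition $\syndvec\leftrightarrow\syndvec\oplus\sigma_{\alpha}$ and reproduce the weight $\tfrac{1}{4}h^{\alpha}(\omega^{\alpha}(\syndvec\oplus\sigma_{Q}))\,p(\syndvec\oplus\sigma_{Q})$ of $\hat{\cE}'_{\tilde{O}}(\syndvec)$; the free parameter $\syndvec$ enters simply as the choice of reference syndrome with respect to which the sum over $Q\in\cP^{n}$ enumerates the syndrome shifts. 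An entirely parallel expansion of $\cV(f)=\tr(\rho_{\beta}f^{\dagger}f)-|\tr(\rho_{\beta}f)|^{2}$ in the $\cket{\tilde{O}_{\syndvec}}$ basis, inserting two copies of the Pauli resolution and re-indexing the double sum by $(Q,P)$, yields exactly $\hat{\cV}_{\tilde{O}}(\syndvec)$; the subtracted term $|\tr(\rho_{\beta}f)|^{2}$ is nonzero only in the trivial block $\tilde{O}=\idty$, where it merely projects out the component along $\cket{\idty}$ and so does not disturb the bound.

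Combining the block-diagonal reduction with this identification of blocks gives the claimed inequality $\tau\leq\max_{\tilde{O}}\max_{\syndvec}\tau_{\tilde{O}}(\syndvec)$. The main obstacle I expect is the bookkeeping in the second step: the Lindblad dissipator generates both diagonal contributions from the anticommutator $\{(S_{\omega}^{\alpha})^{\dagger}S_{\omega}^{\alpha},f\}$ and off-diagonal cross-terms from $(S_{\omega}^{\alpha})^{\dagger}f S_{\omega}^{\alpha}$, and one must verify that their sum reassembles cleanly into the $\cdyad{\alpha\circ\tilde{O}_{\syndvec\oplus\sigma_{Q}}}$ projectors with the KMS-reweightings conspiring exactly correctly — this is the standard but tedious GNS-type calculation. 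A lesser subtlety is to note that the Hamiltonian piece $i[H,\cdot]$ of $\cL^{*}$ contributes only anti-Hermitian matrix elements and hence drops out of the Hermitian Dirichlet form, so that the only nontrivial arithmetic is the one involving the dissipator above.
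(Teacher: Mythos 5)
The overall strategy — block-diagonalize $\hat{\cE}$ and $\hat{\cV}$ over the classes $\tilde{O}\in\cP^n/\cS$ and compare per-block support numbers — matches the paper. However, two points of substance diverge from the paper's argument.

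First, the paper never claims that the $\tilde{O}$-block of $\hat{\cE}$ \emph{equals} the displayed $\hat{\cE}'_{\tilde{O}}$; it only asserts the matrix inequality $\hat{\cE}\succeq\hat{\cE}'$, where $\hat{\cE}'\coloneqq\frac{1}{4^n}\bigoplus_{\tilde{O}}\sum_{\syndvec}\hat{\cE}'_{\tilde{O}}(\syndvec)$. The inequality is the correct (and safer) direction: a lower bound on $\hat{\cE}$ only loosens the support-number bound on $\tau$, so one is free to replace the exact Dirichlet form by a more convenient PSD minorant without justifying equality. Your sketch instead tries to reproduce $\hat{\cE}'_{\tilde{O}}(\syndvec)$ exactly from the dissipator and KMS symmetrization; this is unnecessary and, depending on how the inner product is set up, may simply be false (the two matrices need only agree as a lower bound). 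If you do intend equality, you would have to carry out the GNS-type calculation you defer, which is precisely what the paper avoids by accepting $\succeq$.

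Second, the paper's per-block objects are not single $\hat{\cE}'_{\tilde{O}}(\syndvec)$'s but sums $\sum_{\syndvec}\hat{\cE}'_{\tilde{O}}(\syndvec)$, $\sum_{\syndvec}\hat{\cV}_{\tilde{O}}(\syndvec)$. The conclusion $\tau\le\max_{\tilde{O}}\max_{\syndvec}\tau_{\tilde{O}}(\syndvec)$ therefore requires the extra observation that if $\tau_{\tilde{O}}(\syndvec)\hat{\cE}'_{\tilde{O}}(\syndvec)-\hat{\cV}_{\tilde{O}}(\syndvec)\succeq0$ for every $\syndvec$, then $(\max_{\syndvec}\tau_{\tilde{O}}(\syndvec))\sum_{\syndvec}\hat{\cE}'_{\tilde{O}}(\syndvec)-\sum_{\syndvec}\hat{\cV}_{\tilde{O}}(\syndvec)\succeq0$ — i.e.\ the support number of a sum of pairs is at most the maximum of the individual support numbers. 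You treat $\syndvec$ as a passive ``reference syndrome'' rather than as a summation index, so this step is absent from your argument; without it the $\max_{\syndvec}$ in the conclusion is unaccounted for. Your block-diagonal reduction, the observation that the Hamiltonian commutator contributes only anti-Hermitian matrix elements, and the remark about $|\tr(\rho_\beta f)|^2$ living in the trivial block are all correct, but the proof as sketched does not close.
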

\begin{proof}
    The proof is from \cite{temme2016thermalization}, but we sketch it here.
    We define a new matrix $\hat{\cE}'$:
    \[
    \hat{\cE}' \coloneq \frac{1}{4^n} \bigoplus_{\tilde{O} \in \cP^n/\cS} \left( \sum_{\syndvec} \hat{\cE}'_{\tilde{O}}(\syndvec) \right)
    \]
    Which satisfies $\hat{\cE} \geq \hat{\cE}'$. Further one can verify that: 
    \[
    \hat{\cV} = \frac{1}{4^n} \bigoplus_{\tilde{O} \in \cP^n/\cS} \left( \sum_{\syndvec} \hat{\cV}_{\tilde{O}}(\syndvec) \right)
    \]
    Then we have:
    \[
    0 \leq \frac{1}{4^n} \bigoplus_{\tilde{O}} \left( \sum_{\syndvec}  \left (\tau_{\tilde{O}}(\syndvec)\hat{\cE}'_{\tilde{O}}(\syndvec)  - \hat{\cV}_{\tilde{O}}(\syndvec) \right ) \right) \leq \left(\max_{\tilde{O}} \max_{\syndvec} \tau_{\tilde{O}}(\syndvec) \right) \hat{\cE}' - \hat{\cV} 
    \]
    Reading off the definition of support number -- see Definition \ref{def:support-number} -- we get the desired result.
\end{proof}

With Lemma \ref{lem:matrice-forms} giving us matrices that will allow us to bound $\tau$, our next step is to find a $A,W,B$ decomposition for $(\hat{\cE}'_{\tilde{O}}(\syndvec), \hat{\mathcal{V}}_{\tilde{O}}(\syndvec))$ so that we are able to apply Lemma \ref{lem:factorisation-bound}.
\begin{lemma}
\label{lem:factorisation}
    For $\hat{\cE}'_{\tilde{O}}(\syndvec)$ and $\hat{\mathcal{V}}_{\tilde{O}}(\syndvec)$ as above, we can find matrices $A_{\tilde{O}}(\syndvec),B_{\tilde{O}}(\syndvec), W_{\tilde{O}}(\syndvec)$ that allow us to apply to Lemma \ref{lem:factorisation-bound} which are given by:
    \[
    A_{\tilde{O}}(\syndvec) = \sum_Q \sum_{\alpha} \sqrt{\frac{1}{4} h(\omega^\alpha(\syndvec \oplus \synd{Q})) \syndromepb{\syndvec \oplus \synd{Q}} } \cdyadm{\alpha \circ \tilde{O}_{\syndvec \oplus \synd{Q}} }{Q, \alpha Q}
    \]    
    \[
  B_{\tilde{O}}(\syndvec) = 
     \sum_{Q} \sum_{P} \sqrt{\frac{1}{2^n} \syndromepb{\syndvec \oplus \synd{Q}} \syndromepb{\syndvec \oplus \synd{PQ}
     }} \cdyadm{P \circ \tilde{O}_{\syndvec \oplus \synd{Q}}}{Q, PQ} \otimes \frac{1}{\sqrt{\abs{\Gamma_{P}}}} \sum_{\ppath \in \Gamma_{P} } \cbra{\ppath}
  \]
    \[
        W_{\tilde{O}}(\syndvec) = \sum_{Q} \sum_{P}  \frac{1}{\sqrt{\abs{\Gamma_{P}}}}  \sum_{\ppath \in \ppathset_{P}} W_{\tilde{O}}^{Q \xrightarrow{\gamma} P } (\syndvec) \otimes \cbra{\ppath}
    \]
    where for given $\gamma = \left(U_{0}, U_{1}, U_{2}, \dots, U_{T} \right )$ we write $\alpha_{t+1} = U_t^\dagger U_{t+1}$ and:
   \[
     W_{\tilde{O}}^{Q \xrightarrow{\gamma} P } (\syndvec) = \sum_{t=0}^{|\gamma|-1} \sqrt{\frac{4 \syndromepb{\syndvec \oplus \synd{Q}} \syndromepb{\syndvec \oplus \synd{U_TQ}}}{2^n h(\omega^{\alpha_{t+1}}(\synd{U_t Q})) \syndromepb{\syndvec \oplus \synd{U_t Q}}}} \theta_{U_t, \tilde{O}} \rvert\, U_t Q, U_{t+1} Q ) (Q, PQ\vert 
  \]
  and $\{\cket{\gamma}\}_\gamma$ is an orthonormal basis we use to index the different Pauli paths.
\end{lemma}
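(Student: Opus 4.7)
The plan is to verify the three defining identities of Lemma \ref{lem:factorisation-bound}, namely $\hat{\cE}'_{\tilde{O}}(\syndvec) = A_{\tilde{O}}(\syndvec) A_{\tilde{O}}(\syndvec)^\dagger$, $\hat{\cV}_{\tilde{O}}(\syndvec) = B_{\tilde{O}}(\syndvec) B_{\tilde{O}}(\syndvec)^\dagger$, and $A_{\tilde{O}}(\syndvec)\, W_{\tilde{O}}(\syndvec) = B_{\tilde{O}}(\syndvec)$, by direct algebraic calculation. Throughout, I would rely on two orthogonality facts: the edge kets $\cket{Q, \alpha Q}$ are orthonormal as $(Q,\alpha)$ varies, and the path kets $\cket{\gamma}$ are orthonormal across $\gamma \in \Gamma_P$. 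The key structural input is the telescopic identity of Proposition \ref{prop:telescopic}, which is what makes the definition of $W_{\tilde{O}}(\syndvec)$ natural.

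First, to check $A A^\dagger = \hat{\cE}'_{\tilde{O}}(\syndvec)$, I would simply expand: each bra $\cbra{Q,\alpha Q}$ appearing in $A^\dagger$ is paired with its matching ket in $A$ via orthonormality, and the scalar factors combine to give $\tfrac{1}{4} h^\alpha(\omega^\alpha(\syndvec \oplus \synd{Q})) p(\syndvec \oplus \synd{Q})$, which is precisely the coefficient in the definition of $\hat{\cE}'_{\tilde{O}}(\syndvec)$.

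For $B B^\dagger = \hat{\cV}_{\tilde{O}}(\syndvec)$, the calculation is similar, with the additional observation that the factor $\frac{1}{\sqrt{|\Gamma_P|}} \sum_{\gamma \in \Gamma_P} \cbra{\gamma}$ is a unit vector in the path sector, so pairing it with its adjoint yields exactly $1$. The remaining scalar factor matches the definition of $\hat{\cV}_{\tilde{O}}(\syndvec)$ directly.

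The main work is step three, $AW = B$. I would compute $A_{\tilde{O}}(\syndvec) W_{\tilde{O}}(\syndvec)$ by first selecting a path $\gamma = (U_0, \ldots, U_T) \in \Gamma_P$ and a basepoint $Q$, so that $W$ contributes terms indexed by $t \in \{0,\ldots,|\gamma|-1\}$ carrying the ket $\cket{U_t Q, U_{t+1} Q}$. Under $A$, the unique surviving contribution selects $\alpha = \alpha_{t+1}$ and $Q \leftarrow U_t Q$, producing the ket $\cket{\alpha_{t+1} \circ \tilde{O}_{\syndvec \oplus \synd{U_t Q}}}$. The scalar factors multiply: the $h$ and $p(\syndvec \oplus \synd{U_t Q})$ contributions cancel between $A$ and $W$, and $\sqrt{1/4} \cdot \sqrt{4/2^n} = \sqrt{1/2^n}$ combines with the remaining $\sqrt{p(\syndvec \oplus \synd{Q}) p(\syndvec \oplus \synd{U_T Q})}$ (using $U_T = P$) to exactly reproduce the scalar of $B$. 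The sign $\theta_{U_t, \tilde{O}}$ is precisely the one appearing in Proposition \ref{prop:telescopic}, and summing over $t$ collapses the telescopic sum into $\cket{P \circ \tilde{O}_{\syndvec \oplus \synd{Q}}}$. After tensoring with the path factor $\cbra{\gamma}$ and resumming over $\gamma$, this matches $B_{\tilde{O}}(\syndvec)$ term by term.

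The main obstacle is purely bookkeeping: matching the triple indexing $(Q, \alpha, \gamma)$ on both sides, making sure that the $p$- and $h$-weighted square roots in $W$ cancel the corresponding factors from $A$, and confirming that the telescoping signs $\theta_{U_t, \tilde{O}}$ appear with the right convention so that Proposition \ref{prop:telescopic} applies verbatim. Once these are lined up, the three identities follow without further input, and Lemma \ref{lem:factorisation-bound} can then be invoked with this decomposition.
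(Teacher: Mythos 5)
Your proposal is correct and follows essentially the same route as the paper's own proof: you verify $A A^\dagger = \hat{\cE}'_{\tilde{O}}(\syndvec)$ and $BB^\dagger = \hat{\cV}_{\tilde{O}}(\syndvec)$ directly from orthonormality of the $\cket{Q,\alpha Q}$ and $\cket{\gamma}$ bases (with the path factor normalizing to a unit vector), then reduce $AW = B$ to a per-path computation $A_{\tilde{O}}(\syndvec)\, W_{\tilde{O}}^{Q\xrightarrow{\gamma}P}(\syndvec)$ whose $h$- and $p$-weighted square roots cancel and whose $\sum_t \theta_{U_t,\tilde{O}}\,\cket{\alpha_{t+1}\circ\tilde{O}_{\syndvec\oplus\synd{U_tQ}}}$ sum collapses via Proposition~\ref{prop:telescopic} to $\cket{P\circ\tilde{O}_{\syndvec\oplus\synd{Q}}}$. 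This is precisely the paper's computation.
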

\begin{proof}
    We first verify the quantities $A_{\tilde{O}}(\syndvec) A_{\tilde{O}}(\syndvec)^\dagger$ and $B_{\tilde{O}}(\syndvec) B_{\tilde{O}}(\syndvec)^\dagger $ yield the desired results:
    \begin{align*}
        A_{\tilde{O}}(\syndvec) A_{\tilde{O}}(\syndvec)^\dagger = \sum_Q \sum_{\alpha} {\frac{1}{4} h(\omega^\alpha(\syndvec \oplus \synd{Q})) \syndromepb{\syndvec \oplus \synd{Q}} } \cdyad{\alpha \circ \tilde{O}_{\syndvec \oplus \synd{Q}} } = \hat{\mathcal{E}}_{\tilde{O}}'(\syndvec)
    \end{align*}
    \begin{align*}
        B_{\tilde{O}}(\syndvec) B_{\tilde{O}}(\syndvec)^\dagger = & \sum_{Q} \sum_{P} {\frac{1}{2^n} \syndromepb{\syndvec \oplus \synd{Q}} \syndromepb{\syndvec \oplus \synd{PQ}
     }} \cdyad{P \circ \tilde{O}_{\syndvec \oplus \synd{Q}}}\otimes \frac{1}{\abs{\Gamma_{P}}} \sum_{\ppath \in \Gamma_{P} } \cbraket{\ppath}{\ppath} \\
     = & \sum_{Q} \sum_{P} {\frac{1}{2^n} \syndromepb{\syndvec \oplus \synd{Q}} \syndromepb{\syndvec \oplus \synd{PQ}
     }} \cdyad{P \circ \tilde{O}_{\syndvec \oplus \synd{Q}}} = \hat{\mathcal{V}}_{\tilde{O}}(\syndvec)
    \end{align*}
    We then address the $AW= B$ condition. Exploiting the telescopic property (see Proposition \ref{prop:telescopic}), we first compute $A_{\tilde{O}}(\syndvec)  W_{\tilde{O}}^{Q \xrightarrow{\gamma} P } (\syndvec) $ :
    \begin{align*}
        A_{\tilde{O}}(\syndvec)  W_{\tilde{O}}^{Q \xrightarrow{\gamma} P } (\syndvec) =& \sum_{t=0}^{|\gamma|-1} \sqrt{\frac{1}{2^n} \syndromepb{\syndvec \oplus \synd{Q}} \syndromepb{\syndvec \oplus \synd{PQ}}} \theta_{U_t, \tilde{O}} \cdyadm{\alpha_{t+1} \circ \tilde{O}_{\syndvec \oplus \synd{U_tQ}}}{Q, PQ} \\
        =&  \sqrt{\frac{1}{2^n} \syndromepb{\syndvec \oplus \synd{Q}} \syndromepb{\syndvec \oplus \synd{PQ}}}  \left( \sum_{t=0}^{|\gamma|-1} \theta_{U_t, \tilde{O}} \cket{\alpha_{t+1} \circ \tilde{O}_{\syndvec \oplus \synd{U_tQ}}} \right ) \cbra{Q, PQ} \\
        = & \sqrt{\frac{1}{2^n} \syndromepb{\syndvec \oplus \synd{Q}} \syndromepb{\syndvec \oplus \synd{PQ}}}   \cdyadm{P \circ \tilde{O}_{\syndvec \oplus \synd{Q}}}{Q, PQ} 
    \end{align*}
    Given the result of the above subcomputation, we can easily verify that $A_{\tilde{O}}(\syndvec) W_{\tilde{O}}(\syndvec) = B_{\tilde{O}}(\syndvec)$:
    \begin{align*}
      A_{\tilde{O}}(\syndvec) W_{\tilde{O}}(\syndvec) = &\sum_{Q} \sum_{P}    \sum_{\ppath \in \ppathset_{P}} A_{\tilde{O}}(\syndvec)  W_{\tilde{O}}^{Q \xrightarrow{\gamma} P } (\syndvec) \otimes \frac{1}{\sqrt{\abs{\Gamma_{P}}}} \cbra{\ppath}  \\
     = & \sum_{Q} \sum_{P} \sqrt{\frac{1}{2^n} \syndromepb{\syndvec \oplus \synd{Q}} \syndromepb{\syndvec \oplus \synd{PQ}}}   \cdyadm{P \circ \tilde{O}_{\syndvec \oplus \synd{Q}}}{Q, PQ} \otimes  \sum_{\ppath \in \ppathset_{P}}  \cbra{\ppath}\\
    = & B_{\tilde{O}}(\syndvec)
    \end{align*}
\end{proof}
Once we have established the factorisation in Lemma \ref{lem:factorisation}, we want to apply Lemma \ref{lem:factorisation-bound}. This requires us to compute the norm of the row vectors of $W_{\tilde{O}}(\syndvec)$.

\begin{proposition}
\label{prop:row-norm}
    Let $W_{\tilde{O}}(\syndvec)$ be as above; then the norm of its row vectors $\ket{w_{(U_eQ, V_eQ)}(\syndvec)} $, as defined in Lemma \ref{lem:factorisation-bound}, can be expressed as:
    \[ \norm{\ket{w_{(U_eQ, V_eQ)}(\syndvec)}}_2^2= \frac{4 }{2^n h\left(\omega^{\alpha_e}\left(\syndvec \oplus \sigma_{U_e Q}\right)\right) \syndromepb{\syndvec \oplus \sigma_{U_e Q}}} \sum_P \frac{\syndromepb{\syndvec \oplus\synd{Q }} \syndromepb{\syndvec \oplus \synd{PQ}}}{\Omega_{P}(e)}\]
    With $\alpha_e = U_e^\dagger V_e$.
\end{proposition}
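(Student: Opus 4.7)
My plan is to apply the row-norm identity $\norm{\ket{w_k}}_2^2 = \sum_m |W_{km}|^2$ directly to the factorisation produced in Lemma~\ref{lem:factorisation}. The first task is bookkeeping: identifying which columns of $W_{\tilde{O}}(\syndvec)$ contribute a nonzero entry to the row labelled $(U_eQ, V_eQ)$. Inspecting the expression for $W_{\tilde{O}}^{Q' \xrightarrow{\gamma} P'}(\syndvec)$, the outer bra $\cbra{Q', P'Q'} \otimes \cbra{\gamma}$ paired with the inner ket $\cket{U_t Q', U_{t+1}Q'}$ lands on this row iff $Q' = Q$ and there is a step $t$ with $U_t = U_e$ and $U_{t+1} = V_e$, i.e.\ the edge $e = (U_e, V_e)$ belongs to $\gamma$. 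Equivalently, the contributing columns are indexed by $P' \in \cP^n$ and $\gamma \in \Gamma_{P'}^{\cap e}$, and at each such column the matrix entry reads
\begin{equation*}
\frac{\theta_{U_e, \tilde{O}}}{\sqrt{|\Gamma_{P'}|}} \sqrt{\frac{4\, \syndromepb{\syndvec \oplus \synd{Q}} \, \syndromepb{\syndvec \oplus \synd{P'Q}}}{2^n\, h(\omega^{\alpha_e}(\syndvec \oplus \synd{U_e Q}))\, \syndromepb{\syndvec \oplus \synd{U_e Q}}}}.
\end{equation*}

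Squaring the modulus kills the sign $\theta_{U_e, \tilde{O}} \in \{\pm 1\}$, and the sum over $\gamma \in \Gamma_{P'}^{\cap e}$ simply multiplies by $|\Gamma_{P'}^{\cap e}|$. Combined with the prefactor $1/|\Gamma_{P'}|$ this yields the ratio $|\Gamma_{P'}^{\cap e}|/|\Gamma_{P'}| = 1/\Omega_{P'}(e)$, exactly the quantity introduced alongside Definition~\ref{def:free-energy}. Pulling out the factors that are independent of $P'$ -- the $h$, the $\syndromepb{\syndvec \oplus \synd{U_e Q}}$, the $2^n$ and the overall $4$ -- and renaming $P' \to P$ in the residual sum produces precisely the identity claimed in the proposition.

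The only genuine subtlety I anticipate is multiplicity: in principle a single edge $e$ could appear at more than one step of the same path $\gamma$, in which case $|\Gamma_{P'}^{\cap e}|$ should either be interpreted with multiplicity or one should invoke a convention (implicit in Definition~\ref{def:pauli-flow}) that paths do not revisit an edge. Modulo this cosmetic point, the argument is a direct mechanical unpacking of Lemma~\ref{lem:factorisation}, with no algebraic manipulation beyond squaring and summing scalar coefficients.
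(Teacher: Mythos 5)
Your proposal is correct and takes essentially the same route as the paper. You directly apply the identity $\norm{\ket{w_k}}_2^2 = \sum_m |W_{km}|^2$ and sum over contributing columns, while the paper first writes out the row vector as a ket and then computes its $2$-norm in two steps; the bookkeeping — identifying the contributing columns as $(Q, PQ, \gamma)$ with $\gamma \in \Gamma_P^{\cap e}$, killing the sign $\theta_{U_e,\tilde{O}}$ upon squaring, and collapsing the path sum to $|\Gamma_P^{\cap e}|/|\Gamma_P| = 1/\Omega_P(e)$ — is identical in both. Your side remark about edge multiplicity within a single path is a fair point that the paper leaves implicit as well; paths of length at most $n$ that increment a single-site Pauli at each step do not revisit an edge without retracing, so the convention causes no trouble in the applications considered.

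One place where you should tighten the wording: you assert that a column $(Q',P'Q',\gamma)$ contributes to the row $(U_eQ, V_eQ)$ \emph{iff} $Q'=Q$. The raw matching condition from Lemma~\ref{lem:factorisation} is $U_tQ' = U_eQ$ and $U_{t+1}Q'=V_eQ$ for some step $t$, which forces $\alpha_{t+1}=\alpha_e$ but a priori allows $Q' = U_t^\dagger U_e Q \neq Q$ when $U_t \neq U_e$. The paper's own row-vector expansion makes the same restriction silently, so your computation is faithful to the source; but since you state it as an equivalence, you should either justify why the $Q'\neq Q$ columns drop out under the conventions in force, or soften the claim to ``it suffices to consider $Q'=Q$'' with a pointer to the paper's choice of flow.
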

\begin{proof}
    With the expression of $W_{\tilde{O}}(\syndvec)$ given by Lemma \ref{lem:factorisation}, we obtain the corresponding row vectors by fixing $\cket{U_tQ, U_{t+1}Q}$. We rewrite $U_e \coloneq U_t$, $V_e \coloneq U_{t+1}$ , and $\alpha_e = U_e^\dagger V_e$ to have a path-oblivious expression.
    \begin{align*}
    |w_{(U_eQ, V_e Q)(\syndvec)}\rangle 
    =&\sum_P \sum_{\ppath \in \Gamma_P^{\cap e}} \frac{ \sqrt{ 4 \syndromepb{\syndvec \oplus\synd{Q }} \syndromepb{\syndvec \oplus \synd{PQ}}}}{\sqrt{2^n h\left(\omega^{\alpha_e}\left(\syndvec \oplus \sigma_{U_e Q}\right)\right) \syndromepb{\syndvec \oplus \sigma_{U_e Q}} }}  \theta_{U_e, \tilde{O}}   \cket{Q, PQ}\otimes \frac{1}{\sqrt{\abs{\Gamma_{P}}}} \cket{\ppath}\\
    =&\sum_P \frac{ \sqrt{ 4 \syndromepb{\syndvec \oplus\synd{Q }} \syndromepb{\syndvec \oplus \synd{PQ}}}}{\sqrt{2^n h\left(\omega^{\alpha_e}\left(\syndvec \oplus \sigma_{U_e Q}\right)\right) \syndromepb{\syndvec \oplus \sigma_{U_e Q}} }}  \theta_{U_e, \tilde{O}}   \cket{Q, PQ}\otimes  \sum_{\ppath \in \Gamma_P^{\cap e}}\frac{1}{\sqrt{\abs{\Gamma_{P}}}} \cket{\ppath}\\
    =&\frac{2 \sqrt{\syndromepb{\syndvec \oplus\synd{Q }}} \theta_{U_e, \tilde{O}} }{\sqrt{2^n h\left(\omega^{\alpha_e}\left(\syndvec \oplus \sigma_{U_e Q}\right)\right) \syndromepb{\syndvec \oplus \sigma_{U_e Q}} }} \sum_P    \sqrt{ \syndromepb{\syndvec \oplus \synd{PQ}}} \cket{Q, PQ}\otimes \frac{\sum_{\ppath \in \Gamma_P^{\cap e}} \cket{\ppath}  }{\sqrt{\abs{\Gamma_{P}}}}
\end{align*}
In a second time, computing the $2$-norm does not require much more effort. 
\begin{align*}
    \left\| |w_{(U_eQ, V_eQ)}(\syndvec)\rangle  \right\|_2^2  =&  \frac{4 \syndromepb{\syndvec \oplus\synd{Q }}}{2^n h\left(\omega^{\alpha_e}\left(\syndvec \oplus \sigma_{U_e Q}\right)\right) \syndromepb{\syndvec \oplus \sigma_{U_e Q}}} \left \|  \sum_P     \sqrt{ \syndromepb{\syndvec \oplus \synd{PQ}}} \cket{Q, PQ} \otimes \frac{\sum_{\ppath \in \Gamma_P^{\cap e}} \cket{\ppath}  }{\sqrt{\abs{\Gamma_{P}}}}   \right \|_2^2 \\
    =& \frac{4 \syndromepb{\syndvec \oplus\synd{Q }} }{2^n h\left(\omega^{\alpha_e}\left(\syndvec \oplus \sigma_{U_e Q}\right)\right) \syndromepb{\syndvec \oplus \sigma_{U_e Q}}}  \sum_P    \syndromepb{\syndvec \oplus \synd{PQ}}  \left \|\frac{\sum_{\ppath \in \Gamma_P^{\cap e}} \cket{\ppath}  }{\sqrt{\abs{\Gamma_{P}}}}    \right \|_2^2 \\
    =&  \frac{4 \syndromepb{\syndvec \oplus\synd{Q }}}{2^n h\left(\omega^{\alpha_e}\left(\syndvec \oplus \sigma_{U_e Q}\right)\right) \syndromepb{\syndvec \oplus \sigma_{U_e Q}}}  \sum_P    \syndromepb{\syndvec \oplus \synd{PQ}} \frac{\abs{\Gamma_P^{\cap e}}}{\abs{\Gamma_{P}}}  \\
    =&  \frac{4 \syndromepb{\syndvec \oplus\synd{Q }}}{2^n h\left(\omega^{\alpha_e}\left(\syndvec \oplus \sigma_{U_e Q}\right)\right) \syndromepb{\syndvec \oplus \sigma_{U_e Q}}} \sum_P \frac{\syndromepb{\syndvec \oplus \synd{PQ}}}{\Omega_{P}(e)} 
\end{align*}
\end{proof}

We now can plug Proposition \ref{prop:row-norm} into Lemma \ref{lem:factorisation-bound} to obtain the desired upper bound on the support number. 

\begin{lemma}
\label{lem:support-nb-bound}
    Let $\hat{\cE}$ and $\hat{\cV}$ be the matrices corresponding to the Dirichlet form and variance of the Lindbladian described in Section \ref{sec:model}, then their support number satisfies:
    \[
     \tau \leq \frac{4 n}{c_\circ} e^{\beta \bar{f}}
    \]
\end{lemma}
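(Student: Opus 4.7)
The plan is to chain together Lemma \ref{lem:matrice-forms}, Lemma \ref{lem:factorisation}, and Lemma \ref{lem:factorisation-bound} with Proposition \ref{prop:row-norm}. Lemma \ref{lem:matrice-forms} reduces the task to uniformly bounding $\tau_{\tilde{O}}(\syndvec)$ over $(\tilde{O},\syndvec)$, while Lemma \ref{lem:factorisation} supplies the explicit $(A_{\tilde{O}},W_{\tilde{O}},B_{\tilde{O}})$-decomposition needed to invoke Lemma \ref{lem:factorisation-bound}; it therefore suffices to bound $\max_m\sum_{k:W_{km}\neq 0}\|\ket{w_k}\|_2^2$, with each row-norm given in closed form by Proposition \ref{prop:row-norm}.

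The first step is a counting argument. The rows of $W_{\tilde{O}}(\syndvec)$ are indexed by pairs $(U_eQ,V_eQ)$ corresponding to an edge $e$ and a base Pauli $Q$, while the columns are indexed by triples $(Q,PQ,\ppath)$; from the explicit form in Lemma \ref{lem:factorisation}, the non-zero entries of column $(Q_0,P_0Q_0,\gamma_0)$ lie only in rows whose edge $e$ belongs to $\gamma_0$, and there are at most $|\gamma_0|\leq n$ such edges. Hence
\[
\max_m\ \sum_{k:W_{km}\neq 0}\|\ket{w_k}\|_2^2 \;\leq\; n\,\max_{e,Q,\syndvec}\|\ket{w_{(U_eQ,V_eQ)}(\syndvec)}\|_2^2.
\]

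The central step is a Gibbs-ratio identity linking Proposition \ref{prop:row-norm} to Definition \ref{def:free-energy}. Expanding $\syndromepb{\syndvec'}=e^{-\beta\epsilon_{\syndvec'}}/\tr(e^{-\beta H})$ and noting that $U^\dagger \Pi_{\syndvec'} U = \Pi_{\syndvec'\oplus\synd{U}}$ lets us identify
\[
\bar{\epsilon}_P(U_e) = \max_{\syndvec'}\bigl|\epsilon_{\syndvec'\oplus\synd{P}\oplus\synd{U_e}}+\epsilon_{\syndvec'\oplus\synd{U_e}}-\epsilon_{\syndvec'\oplus\synd{P}}-\epsilon_{\syndvec'}\bigr|.
\]
Applied at $\syndvec'=\syndvec\oplus\synd{Q}$, this yields the key inequality
\[
\frac{\syndromepb{\syndvec\oplus\synd{Q}}\,\syndromepb{\syndvec\oplus\synd{PQ}}}{\syndromepb{\syndvec\oplus\synd{U_eQ}}} \;\leq\; \syndromepb{\syndvec\oplus\synd{U_ePQ}}\,e^{\beta\bar{\epsilon}_P(U_e)}.
\]
Plugging this into Proposition \ref{prop:row-norm} and invoking Definition \ref{def:free-energy} in the form $e^{\beta\bar{\epsilon}_P(U_e)}/\Omega_P(e)\leq e^{\beta\bar{f}}$ bounds $\|\ket{w_{(U_eQ,V_eQ)}(\syndvec)}\|_2^2$ by
\[
\frac{4\,e^{\beta\bar{f}}}{2^n\, h\bigl(\omega^{\alpha_e}(\syndvec\oplus\synd{U_eQ})\bigr)}\sum_P \syndromepb{\syndvec\oplus\synd{U_ePQ}}.
\]

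To conclude, as $P$ ranges over $\cP^n$ the syndrome $\synd{U_ePQ}$ hits each element of $\{-1,1\}^r$ with uniform multiplicity $4^n/2^r$; combined with $\sum_{\vec{u}\in\{-1,1\}^r}\syndromepb{\vec{u}}=2^{r-n}$ (which follows from $\tr(\Pi_{\vec{u}})=2^{n-r}$), the sum collapses to $2^n$. Together with the bath assumption $h\geq c_\circ$, this gives $\|w\|_2^2\leq 4e^{\beta\bar{f}}/c_\circ$, and multiplying by the counting factor $n$ yields the claimed $\tau\leq 4ne^{\beta\bar{f}}/c_\circ$. The main obstacle is the Gibbs-ratio step: one must recognize $\bar{\epsilon}_P(U_e)$ as the operator norm of a diagonal-in-syndrome-basis combination of four shifted Hamiltonians, select the correct sign inside its $\max|\cdot|$, and track the syndrome shifts carefully. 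Everything downstream -- the collapsing sum over $P$, the lower bound $h\geq c_\circ$, and the counting factor of $n$ -- is routine.
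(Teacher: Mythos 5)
Your proposal is correct and follows essentially the same route as the paper's proof: it chains Lemma~\ref{lem:matrice-forms}, Lemma~\ref{lem:factorisation}, Lemma~\ref{lem:factorisation-bound}, and Proposition~\ref{prop:row-norm}, bounds the sum over edges in a path by $n$ times the max, applies the same Gibbs-ratio inequality $p(\syndvec\oplus\synd{Q})\,p(\syndvec\oplus\synd{PQ})\le p(\syndvec\oplus\synd{U_eQ})\,p(\syndvec\oplus\synd{U_ePQ})\,e^{\beta\bar\epsilon_P(U_e)}$, invokes $e^{\beta\bar\epsilon_P(U_e)}/\Omega_P(e)\le e^{\beta\bar f}$, and collapses the remaining sum over $P$ to $2^n$. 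The only cosmetic differences are that you derive the Gibbs-ratio step from an explicit diagonal-in-syndrome-basis expression for $\bar\epsilon_P(U_e)$ rather than via expectation values, and you verify $\tfrac{1}{2^n}\sum_{P}p(\syndvec\oplus\synd{P})=1$ by a uniform-fiber count rather than citing Eq.~(91) of \cite{temme2016thermalization}.
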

\begin{proof}
    We start by expressing the bound on $\tau$ obtained from  Proposition \ref{prop:row-norm} into Lemma \ref{lem:factorisation-bound}. Remember that Lemma \ref{lem:matrice-forms} gives us $\tau \leq \max_{\tilde{O} } \max_{\syndvec} \tau_{\tilde{O}}(\syndvec)$. Since $\norm{\ket{w_{(U_eQ, V_eQ)}(\syndvec)}}_2^2$ is independent of $\tilde{O}$, we can abstract away this part of the maximisation:
    \begin{align*}
        \tau &\leq \max_{\tilde{O} } \max_{\syndvec}  \tau_{\tilde{O}}(\syndvec) \\
    & \leq \max_{\syndvec}  \max_{Q, P} \ \max_{\gamma \in \Gamma_P}  \sum_{e \in \gamma}  \norm{\ket{w_{(U_eQ, V_eQ)}(\syndvec)}}_2^2 \\
    &\leq n \max_{\syndvec}  \max_{Q} \ \max_{e} \norm{\ket{w_{(U_eQ, V_eQ)}(\syndvec)}}_2^2 \\
    &= \frac{4n}{2^n} \max_{\syndvec}  \max_{Q} \ \max_{e}  \frac{1}{ h\left(\omega^{\alpha_e}\left(\syndvec \oplus \sigma_{U_e Q}\right)\right) \syndromepb{\syndvec \oplus \sigma_{U_e Q}}}  \sum_{P'}  \frac{\syndromepb{\syndvec \oplus\synd{Q }} \syndromepb{\syndvec \oplus \synd{PQ'}}}{\Omega_{P'}(e)}\\
    & =\frac{4n}{2^n} \max_{\syndvec}  \max_{e}    \frac{1}{h\left(\omega^{\alpha_e}\left(\syndvec \oplus \sigma_{ U_e}\right)\right) \syndromepb{\syndvec \oplus \sigma_{ U_e}}}   \sum_{P'} \frac{ \syndromepb{\syndvec} \syndromepb{\syndvec \oplus \synd{P'}} }{\Omega_{P'}(e)} 
    \end{align*}
    The third line follows from $\sum_{e \in \gamma} (\cdot)\leq n \cdot \max_e (\cdot)$ because the length of $\gamma$ is at most $n$. The last line follows from relabelling $\syndvec \rightarrow \syndvec \oplus \sigma_Q$. Remember that we have $c_\circ \leq h(\cdot)$ for some $c_\circ$. 
    \begin{align*}
    \tau & \leq \frac{4n}{2^n c_\circ}\max_{\syndvec}  \max_{e} \frac{ 1 }{  \syndromepb{\syndvec \oplus \sigma_{ U_e}}} \sum_{P'} \frac{\syndromepb{\syndvec} \syndromepb{\syndvec \oplus \synd{P'}}}{\Omega_{P'}(e)}     \\
    & \leq \frac{4 n}{2^n c_\circ}\max_{\syndvec}  \max_e \frac{  1}{  \syndromepb{\syndvec \oplus \sigma_{ U_e}}}   \sum_{P'} \frac{ \syndromepb{\syndvec \oplus \sigma_{ U_e}} \syndromepb{\syndvec \oplus \synd{U_e P'}}}{\Omega_{P'}(e)} e^{\beta\bar{\epsilon}_{P'}(U_e)} 
    \end{align*}
    The last line follows from the observation that $\bra{\psi_{\syndvec}} U_e H U_e\ket{\psi_{\syndvec}} + \bra{\psi_{\syndvec}} P'^\dagger U_e^\dagger H U_e P' \ket{\psi_{\syndvec}} \leq \bra{\psi_{\syndvec}} H \ket{\psi_{\syndvec}} + \bra{\psi_{\syndvec}} P'^\dagger H P' \ket{\psi_{\syndvec}}  + \bar{\epsilon}_{P'}(U_e)$ and thus $\syndromepb{\syndvec \oplus \sigma_{ U_e}} \syndromepb{\syndvec \oplus \synd{U_e P'}}e^{\beta\bar{\epsilon}_{P'}(U_e)} \geq \syndromepb{\syndvec} \syndromepb{\syndvec \oplus \synd{P'}}$. This finally yields \footnote{Interestingly, the substitution $P' \rightarrow U_eP'$ corresponds to what is described as an `encoding' in \cite{jerrum2003counting,jerrum1996markov}. In these works it is crucial the encoding is injective over its domain. Here it is resolved by the fact that the encoding is an invertible linear map.}:
    \begin{align*}
        \tau & \leq\frac{4 n}{2^n c_\circ}\max_{\syndvec}  \max_e    \sum_{P'} \syndromepb{\syndvec \oplus \synd{P'}} e^{\beta\bar{\epsilon}_{P'}(U_e)  + \log(\frac{1}{\Omega_{P'}(e) } )}\\
    &  \leq \frac{4 n}{c_\circ} e^{\beta \bar{f}} \max_{\syndvec}  \frac{1}{2^n } \sum_{P'}  \syndromepb{\syndvec \oplus \synd{P'}}\\
    & \leq \frac{4 n}{c_\circ} e^{\beta \bar{f}}
    \end{align*}
    Where the last line follows from $\frac{1}{2^n } \sum_{P'}  \syndromepb{\syndvec \oplus \synd{P'}} = 1$, see Equation (91) of \cite{temme2016thermalization}. 
\end{proof}

\begin{theorem}
    \label{thm:final}
    Let $H$ be a stabilizer Hamiltonian with a Pauli flow of free energy $\bar{f}$ at temperature $\beta$. Then for $t_{\mathrm{mix}}$ as defined by Equation \ref{eq:mixing-time}, we have:
    \[
    t_{\mathrm{mix}} \leq \cO(\beta\norm{H}_\infty n e^{\beta \bar{f}})
    \]
\end{theorem}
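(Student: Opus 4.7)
The plan is to simply chain together the two main ingredients already assembled in the preceding sections. Lemma \ref{lem:support-nb-bound} bounds the support number of the Davies Lindbladian by $\tau \leq \tfrac{4n}{c_\circ}\,e^{\beta \bar{f}}$ whenever a Pauli flow of free energy at most $\bar{f}$ exists, and Equation \ref{eq:mixing-time} converts any support-number bound into a mixing-time bound via
\[
t_{\mathrm{mix}} = \tau\bigl(c_0 \beta \norm{H}_\infty + \log 4\bigr).
\]
So the proof is essentially a one-line substitution.

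Concretely, I would first appeal to Lemma \ref{lem:support-nb-bound} applied to the flow hypothesized in the statement, yielding $\tau \leq \tfrac{4n}{c_\circ} e^{\beta \bar{f}}$. Next I would substitute this into Equation \ref{eq:mixing-time} to obtain
\[
t_{\mathrm{mix}} \leq \frac{4n}{c_\circ}\, e^{\beta \bar{f}} \bigl( c_0 \beta \norm{H}_\infty + \log 4 \bigr).
\]
Since $c_\circ, c_0$ are universal constants independent of the system size (as discussed in Section \ref{sec:model}), the prefactor collapses to a constant and the dominant term is $\beta \norm{H}_\infty$, which absorbs the additive $\log 4$. Hiding the constants inside $\cO(\cdot)$ yields the claimed bound $t_{\mathrm{mix}} \leq \cO(\beta \norm{H}_\infty n e^{\beta \bar{f}})$.

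There is no real obstacle here; the substantive work has already been done inside Lemma \ref{lem:support-nb-bound} (the factorisation $AW=B$ via the telescopic identity and the cancellation that produces the $e^{\beta \bar{f}}$ factor via the bound $\syndromepb{\syndvec \oplus \sigma_{U_e}}\syndromepb{\syndvec \oplus \synd{U_e P'}} e^{\beta \bar{\epsilon}_{P'}(U_e)} \geq \syndromepb{\syndvec}\syndromepb{\syndvec \oplus \synd{P'}}$), and the Theorem \ref{thm:support-nb-mixing} machinery that turns $\tau$ into a trace-norm contraction bound. The only thing worth double-checking is that the identification $\bar{f} = \max_{P,e}\bigl[\bar{\epsilon}_P(U_e) + \beta^{-1}\log(1/\Omega_P(e))\bigr]$ used in Lemma \ref{lem:support-nb-bound} matches Definition \ref{def:free-energy} verbatim, which it does, so the hypothesis of Theorem \ref{thm:final} directly feeds into the lemma with no reinterpretation.
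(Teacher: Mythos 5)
Your argument matches the paper's own proof exactly: the paper states it as a ``straightforward corollary of Lemma~\ref{lem:support-nb-bound}'', and your substitution of $\tau \leq \tfrac{4n}{c_\circ}e^{\beta\bar{f}}$ into Equation~\ref{eq:mixing-time} is precisely the step being elided. Correct and same approach.
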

\begin{proof}
    Straightforward corollary of Lemma \ref{lem:support-nb-bound}.
\end{proof}

In most applications in QEC, the Hamiltonian $H$ contains at most $O(n)$ terms, each with operator norm $1$. Theorem \ref{thm:final} then reads $t_{\mathrm{mix}} \leq \cO(\beta n^2 e^{\beta \bar{f}})$.

\section{Application to Layer Codes}
\label{sec:layer-codes}
Before considering practical applications, we ought to reconsider the definition of a Pauli flow we used to prove Theorem \ref{thm:final}. It is arguably very rigid, as it requires us to ascribe a path to \emph{every} Pauli. Instead, one might hope to leverage the degeneracy provided by the stabiliser group, and only have to specify a path for every equivalence class. We formally show that to be the case in Lemma \ref{lem:pauli-flow-degen} below.

\begin{lemma}
\label{lem:pauli-flow-degen}
    Let $\Gamma_P = \{\gamma\}_\gamma$ be a Pauli flow for a stabiliser Hamiltonian with free energy $\bar{f}$, with the exception that $\gamma$ terminates at $S_\gamma  P$ for some $S_\gamma \in \cS$. Then there exists $\Gamma_P' = \{ \gamma'\}_\gamma$ that satisfies the definition of Definition \ref{def:pauli-flow}. The free energy of the resulting flow obeys: 
    \[
    \bar{f}' \leq \bar{f} + \omega
    \]
    Where $\omega$ is the maximum weight of the terms of the Hamiltonian $H$.
\end{lemma}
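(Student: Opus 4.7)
The plan is to prolong each path $\gamma \in \Gamma_P$, currently ending at $S_\gamma P$, by a short single-site suffix that walks from $S_\gamma P$ to $P$, and then to bound the free-energy cost of these new edges by $\omega$ using that $S_\gamma$ commutes with every term of $H$. Concretely, for each $\gamma$ I would decompose $S_\gamma = \alpha_w \cdots \alpha_1$ into single-site Paulis (one per qubit in $\mathrm{supp}(S_\gamma)$) and define $\gamma'$ by appending the Paulis $\alpha_1 U_T, \alpha_2\alpha_1 U_T, \ldots, S_\gamma U_T = P$ to the end of $\gamma$ (using $S_\gamma^2 = \idty$). The resulting $\gamma'$ starts at $\idty$, ends at $P$, and consecutive elements differ by a single-site Pauli, so $\Gamma_P' = \{\gamma'\}_\gamma$ satisfies Definition \ref{def:pauli-flow}.

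The contribution of old edges to the free energy is automatically controlled by the hypothesis. Since $[g_j, S_\gamma] = 1$ for every $j$, one has $[g_j, S_\gamma P] = [g_j, P]$, so the closed-form expression for $\bar{\epsilon}_P(U_e)$ — which depends on $P$ only through these commutators — yields $\bar{\epsilon}_P(U_e) = \bar{\epsilon}_{S_\gamma P}(U_e)$, already bounded by $\bar{f}$. Moreover the prolongation is deterministic and adds no paths, so the multiplicities $\Omega_P(e)$ at old edges are preserved.

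On new edges the intermediate Pauli has the form $U_e = R\,S_\gamma P$ with $R = \alpha_l\cdots\alpha_1$ a sub-product of the decomposition. Since $S_\gamma$ commutes with $R^\dagger H R$ (conjugation by the Pauli $R$ sends each $g_i$ to $\pm g_i$, which still commute with $S_\gamma$), a direct computation gives $U_e^\dagger H U_e = P^\dagger R^\dagger H R P$ and $P^\dagger U_e^\dagger H U_e P = R^\dagger H R$, so
\[
\bar{\epsilon}_P(U_e) = \bigl\lVert (R^\dagger H R - H) + P^\dagger (R^\dagger H R - H) P \bigr\rVert_\infty \leq 2\,\lVert R^\dagger H R - H \rVert_\infty,
\]
a quantity controlled by the $g_i$'s anti-commuting with the sub-Pauli $R$ and bounded by $\omega$ from the bounded-weight structure of the Hamiltonian terms. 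Using a consistent decomposition of $S_\gamma$ across all $\gamma$ with the same endpoint keeps the multiplicity contribution on new edges no worse than on old edges, and collecting the bounds yields $\bar{f}' \leq \bar{f} + \omega$. The main obstacle is the uniform estimate $\lVert R^\dagger H R - H \rVert_\infty \lesssim \omega$ across all sub-products $R$ of a general $S_\gamma \in \cS$: it is immediate when $S_\gamma$ is a single bounded-weight generator (the typical application, permitting the ``one path per equivalence class'' formulation of the remark), but the general $\cS$ case may require an iteration processing $S_\gamma$ one generator at a time.
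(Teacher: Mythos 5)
Your proposal is essentially the paper's own argument: prolong each $\gamma$ from $S_\gamma P$ to $P$ by a single-site suffix that walks through a decomposition of $S_\gamma$, observe that old-edge costs $\bar{\epsilon}_P(U_e)$ are unchanged because $[g_i, P] = [g_i, S_\gamma P]$ for all $i$, bound new-edge costs by the energy of a sub-product of $S_\gamma$, and drop the nonpositive $\log(1/\Omega)$ term on new edges. The ``gap'' you flag — uniformly bounding $\|R^\dagger H R - H\|_\infty$ over sub-products $R$ of a general $S_\gamma \in \cS$ — is also left implicit in the paper, which simply asserts the existence of a path $\zeta(S_\gamma)$ with $\bar{\epsilon}_P(Q_t) \leq \omega$; your iterate-one-generator-at-a-time construction is precisely what supplies it, and the extra factor of $2$ you incur from the triangle inequality can be avoided by noting directly that $\bar{\epsilon}_P(R\,S_\gamma P) = \bar{\epsilon}_P(R)$, as the paper does.
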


\begin{proof}
     As a notational preliminary, for two paths $\gamma = (\idty, P_1, P_2, \dots, P_\gamma)$ and $\zeta = (\idty, Q_1, Q_2, \dots, Q_\zeta)$, we write $\zeta \circ \gamma = (\idty, P_1, P_2, \dots, P,\allowbreak Q_1 P, Q_2P, \dots, Q_\zeta P_\gamma)$ the concatenation of these two paths. 
     
     For any $S \in \cS$, there exists $\zeta(S) = (\idty, Q_1, Q_2, \dots, S)$ a path from $\idty$ to $S$ such that $\bar{\epsilon}_P(Q_t) \leq \omega$ for any $P \in \cP^n$. Because $S_\gamma^2 = \idty$, if we concatenate $\zeta(S_\gamma) \circ \gamma = (\idty, \dots, S_\gamma P, \dots P)$, we get $\Gamma_P' = \{\zeta(S_\gamma) \circ \gamma\}_\gamma$ that fits the requirement of Definition \ref{def:pauli-flow}. The free energy of the resulting flow can then be upper bounded as:
\begin{align*}
    &\max_P \max_{e : e\in \gamma', \gamma' \in \Gamma_P'} \ \bar{\epsilon}_P(U_e) + \frac{1}{\beta}\log\left(\frac{1}{\Omega_{P}(e) } \right) \\ =& \max_P \max_{\gamma \in \Gamma_P} \left(\max_{e\in \gamma} \bar{\epsilon}_P(U_e) + \frac{1}{\beta}\log\left(\frac{1}{\Omega_{P}(e) } \right), \max_{e\in \zeta(S_\gamma) } \bar{\epsilon}_P(U_e S_\gamma P) + \frac{1}{\beta}\log\left(\frac{1}{\Omega_{P}(e) }  \right)\right)  \\
    \leq & \max_P \max_{\gamma \in \Gamma_P} \left(\max_{e\in \gamma} \bar{\epsilon}_P(U_e) + \frac{1}{\beta}\log\left(\frac{1}{\Omega_{P}(e) } \right), \max_{e\in \zeta(S_\gamma) } \bar{\epsilon}_P(U_e S_\gamma P)\right) \\
    = & \max_P \max_{\gamma \in \Gamma_P} \left(\max_{e\in \gamma} \bar{\epsilon}_P(U_e) + \frac{1}{\beta}\log\left(\frac{1}{\Omega_{P}(e) } \right), \max_{e\in \zeta(S_\gamma) } \bar{\epsilon}_P(U_e)\right) \\
    \leq & \max_P \max_{\gamma \in \Gamma_P} \left(\max_{e\in \gamma} \bar{\epsilon}_P(U_e) + \frac{1}{\beta}\log\left(\frac{1}{\Omega_{P}(e) } \right), \omega \right)  \\
    \leq & \bar{f} + \omega
\end{align*}
\end{proof}

%There are, a priori, three regimes: $\epsilon \gg S $, $\epsilon \propto S$, $\epsilon \ll S$. I think it's either extremely rare or impossible that $\epsilon \gg S$, because it would imply that the system is always ordered, even at absurdly high temperatures. So it basically boils down to $\epsilon \propto S$ (critical, there exists a threshold temperature such that $\beta \epsilon > S$ which yields a memory), or $\epsilon \ll S$ (the system doesnt have finite temperature order). 

We now move on to addressing the Layer Codes. We chose this family as it saturates the energy barrier achievable for 3D stabiliser Hamiltonians. If possessing a high enough $\bar{\epsilon}$ was sufficient to exhibit self-correction, then surely it would have to be the case for these codes as well. We remind the reader that for a code whose stabiliser group is generated by $\{S_i\}_i \subset \cP^n$, then the associated Hamiltonian is $H = -\sum_i S_i$.

We refer the reader to \cite{williamson2024layer} for an in depth description of the construction of these codes. For our purposes we will solely focus that a layer code is made up of many surface codes, joined at their intersections by specific line and dot defects. Therefore, within one of these surface codes (i.e. a `layer') excitations are free to travel at not energy cost. They do however incur an energy penalty when they cross a line defect. In short, for a step $U$ supported on $l$ layers, we can guarantee $\bar{\epsilon}_P(U) \leq \cO(l)$, for any $P \in \cP^n$.
\begin{figure}[H]
    \centering
    \includegraphics[page=1, scale=.6]{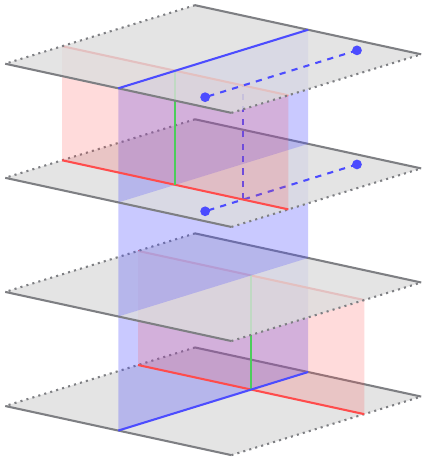}
    \caption{Example of an $X$-type operator on a layer code, and the resulting pattern of excitations. The dashed lines represent the support of the operator, while the dots signal the stabilisers it triggered.}
    %\label{}
\end{figure}
Every layer added increases the energy cost, however, as previously mentioned, excitations are free to roam inside a layer, which increases $\Omega(e)$. We will use these observations to build a flow of small free energy for an $X$-type operator $O \in  \{\idty, X\}^{\otimes n}$. 
Let $\mathfrak{L} = \{L_i\}_i$ be the layers the operator is supported on, then for every $i$ we can morph the restriction of $O$ on $L_i$ into a set of $m$ degenerate parallel string operators, see Figure \ref{fig:degen-operators}. 
\begin{figure*}[ht!]
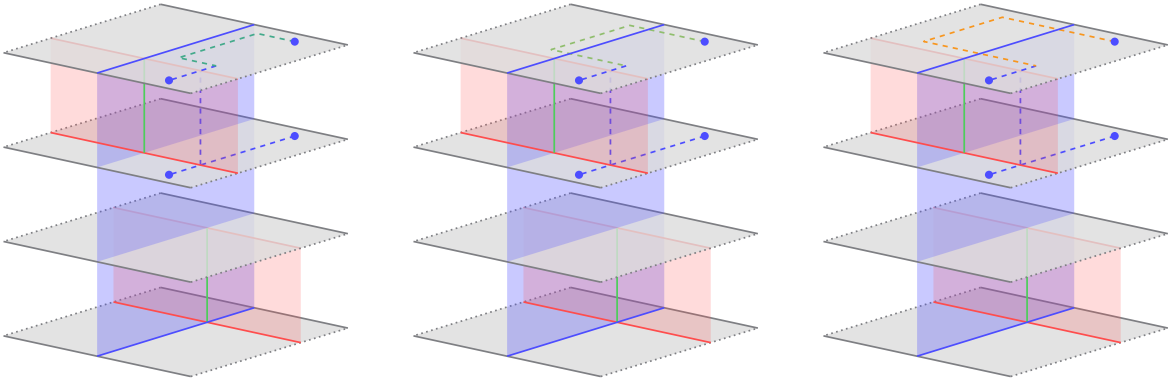

		\centering
		\begin{subfigure}[t]{.31\textwidth}
			\centering
			\includegraphics[page=2, scale=.65]{3d-figures.pdf}
%			\caption{}
		\end{subfigure}
		~
		\begin{subfigure}[t]{.31\textwidth}
			\centering
			\includegraphics[page=3, scale=.65]{3d-figures.pdf}
%			\caption{Lorem ipsum}
		\end{subfigure}
        ~
        \begin{subfigure}[t]{.31\textwidth}
			\centering
			\includegraphics[page=4, scale=.65]{3d-figures.pdf}
%			\caption{Lorem ipsum}
		\end{subfigure}
		\caption{The stabilisers of the layer codes allow us to move the string operator without paying a significant energy cost.}
        \label{fig:degen-operators}
	\end{figure*}
Note that picking any arbitrary combination $\vec{j} = (j_1,j_2, \dots)$ will result in an operator that differs from $O$ by only a stabiliser. Since there are $m^l$ different $\vec{j}$'s, we obtain a flow $\Gamma_P$ consisting of $m^l$ operators in total. These paths simply consist of building the string operators in the order described by $\vec{j}$. When a path reaches the $l'$-th layer, its edges are shared by only the $m^{l-l'}$ remaining combinations of paths for the layers $l'$ to $l$. We can thus establish that for an edge $e$ on the $l'$-th layer, we have $\Omega_O(e) = \frac{m^l}{m^{l-l'}}$.
\begin{figure}[H]
    \centering
    \includegraphics[page=5, scale=.7]{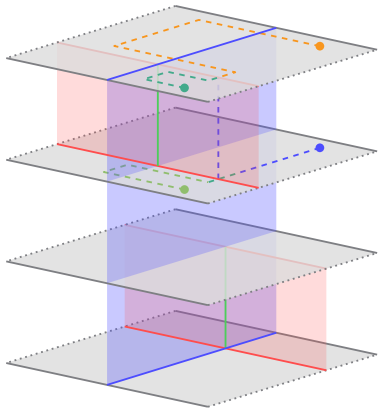}
    \caption{Example of a path from $\Gamma_O$ constructed from using degenerate string operators on each layer.}
    %\label{}
\end{figure}
This gives us a flow of free energy $\max_{l':1...l}  \cO(l') - \frac{1}{\beta}l'\log(m) \leq \cO(1)$. This argument can be repeated for $Z$-type operators, and due to the CSS nature of these codes, this is sufficient to upper bound the free energy barrier. Applying Theorem \ref{thm:final}, we obtain that $t_{\mathrm{mix}} \lesssim \beta n^2 e^{\cO(\beta)} $ .

\section{Acknowledgements}
I benefited from generous explanations from Kristian Temme on \cite{temme2016thermalization}; and Christopher Chubb was kind enough to read a preliminary version of this work. 
This work was partially completed while N.\,B. was supported by the Australian Research Council via the Centre of Excellence in Engineered Quantum Systems (EQUS) project number CE170100009, and by the Sydney Quantum Academy.

\bibliographystyle{alpha}
\bibliography{references}

\end{document}